\newcommand{\remove}[1]{}
\newcommand{\NPhard}{NP-hard\xspace}
\newcommand{\threesat}{\textsc{3-sat}\xspace}
\newcommand{\GR}{GR\xspace}
\newcommand{\GRexpall}{GR-EA\xspace}
\newcommand{\undirected}{Undirected\xspace}
\newcommand{\SBGP}{S-BGP\xspace}
\newcommand{\SoBGP}{So-BGP\xspace}
\newcommand{\ASSET}{AS-SET\xspace}
\newcommand{\simpleHijack}{origin-spoofing\xspace}
\newcommand{\assetBgp}{\ASSET BGP\xspace}
\newcommand{\plainBgp}{plain BGP\xspace}
\newcommand{\hijack}{\textsc{hijack}\xspace}
\newcommand{\interception}{\textsc{interception}\xspace}
\newcommand{\intLengthStruct}{\textsc{Intermediate} structure\xspace}
\newcommand{\shoLengthStruct}{\textsc{Short} structure\xspace}
\newcommand{\longLengthStruct}{\textsc{Long} structure\xspace}
\newcommand{\disruptiveLengthStruct}{\textsc{Disruptive} structure\xspace}
\newcommand{\xxmakefirstuc}[1]{\expandafter\xmakefirstuc\expandafter{#1}}
\newcommand{\xxxmakefirstuc}[1]{\expandafter\xxmakefirstuc\expandafter{#1}}
\title{Computational Complexity of Traffic Hijacking under BGP and S-BGP\thanks{The original publication is available at \url{www.springerlink.com}.}}
\author{Marco Chiesa\inst{1} \and Giuseppe {Di Battista}\inst{1} \and \\ Thomas Erlebach\inst{2} \and Maurizio Patrignani\inst{1}}
\institute{Dept. of Computer Science and Automation, Roma Tre University
 \email{\{chiesa,gdb,patrigna\}@dia.uniroma3.it},
\and Dept. of Computer Science, University of Leicester\\
\email{t.erlebach@mcs.le.ac.uk}
}
\authorrunning{M. Chiesa et al.} %optional. First: Use abbreviated first/middle names. Second (only in severe cases): Use first author plus 'et. al.'
\begin{document}

\maketitle

%\vspace{-0.5cm}
\begin{abstract}
Harmful Internet hijacking incidents put in evidence how fragile the Border
Gateway Protocol (BGP) is, which is used to exchange routing information between
Autonomous Systems (ASes). As proved by recent research contributions, even
\SBGP, the secure variant of BGP that is being deployed, is not fully able to
blunt traffic attraction attacks. Given a traffic flow between two ASes, we
study how difficult it is for a malicious AS to devise a strategy for hijacking
or intercepting that flow. We show that this problem marks a sharp difference
between BGP and \SBGP. Namely, while it is solvable, under reasonable
assumptions, in polynomial time for the type of attacks that are usually
performed in BGP, it is NP-hard for \SBGP. 
% Also, we study how using different features of BGP influences the 
% computational feasibility of finding attack strategies. 
Our study has several by-products. E.g., we solve a problem left open
in the literature, stating when performing a hijacking in \SBGP is equivalent to
performing an interception.
\end{abstract}

\section{Introduction and Overview}\label{sect:intro}

% \section{Introduction}\label{sect:intro}

On 24th Feb.\ 2008, Pakistan Telecom started an unauthorized announcement of prefix 208.65.153.0/24~\cite{u-phy-08}. This announcement was propagated to the rest of the Internet, which resulted in the \emph{hijacking} of YouTube traffic on a global scale. Incidents like this put in evidence how fragile is the \emph{Border Gateway Protocol (BGP)}~\cite{rfc4271}, which is used to exchange routing information between Internet Service Providers (ISPs). Indeed, performing a hijacking attack is a relatively simple task. It suffices to issue a BGP announcement of a victim prefix from a border router of a malicious (or unaware) \emph{Autonomous System (AS)}. Part of the traffic addressed to the prefix will be routed towards the malicious AS rather than to the intended destination. 
% The effectiveness of the hijacking in attracting traffic depends on the relative positions of the 
% ASes in the Internet hierarchy~\cite{bfz-asophaiiti-07} and on how cleverly the bogus announcement is
% forged~\cite{gshr-hsasirp-10}. 
A mischievous variation of the hijacking is the \emph{interception} when, after passing through the malicious AS, the traffic is forwarded to the correct destination. This allows the rogue AS to eavesdrop or even modify the transit packets.

% In~\cite{bfz-asophaiiti-07} a methodology for prefix hijacking and interception in the Internet is proposed 
% and its effectiveness is experimentally studied. The results put in evidence how, even following simple
% guidelines, a malicious AS can attract a very large portion of the traffic addressed to the victim AS.

In order to cope with this security vulnerability, a variant of BGP, called \SBGP~\cite{kls-sbgp-00}, has been proposed, that requires a PKI infrastructure both to validate the correctness of the AS that originates a prefix and to allow an AS to sign its announcements to other ASes. In this setting an AS cannot forge announcements that do not derive from announcements received from its neighbors.
However, \cite{gshr-hsasirp-10} contains surprising results: (i) simple hijacking strategies are 
tremendously effective and (ii) finding a strategy that maximizes the amount of traffic that is hijacked is NP-hard both for BGP and for \SBGP. 

% An analogous solution, which requires less overhead, is \SoBGP~\cite{n-ebssob-04}, where a distributed PKI is used to certify prefix
% ownership and to guarantee the existence of a BGP peering between two ASes. In this setting an AS can only make announcements that match the network topology.
%
% These technologies have not yet been adopted due to both the lack of incentive for their usage and to their
% recent development (e.g.\ RIPE NCC released its resource certification system in Jan.\ 2011
% \cite{rncc-rnrs-11}).

% All the above contributions study the hijacking problem under the assumption that the rogue AS is interested
% to maximize the amount of attracted traffic. This is of course possible. However, in a more realistic scenario,
% rather than haphazardly attracting traffic, the offender may be interested to attract, among the others, a
% specific flow.

In this paper we tackle the hijacking and interception problems from a new
perspective. Namely, given a traffic flow between two ASes, how difficult is it
for a malicious AS to devise a strategy for hijacking or intercepting at least
that specific flow? We show that this problem marks a sharp difference between
BGP and \SBGP. Namely, while it is polynomial time solvable, under reasonable
assumptions, for typical BGP attacks, it is \NPhard for \SBGP. This gives new
complexity related evidence of the effectiveness of the adoption of \SBGP. 
% Also, we observe that different features of BGP, e.g.\ the AS-sets, can increase the computational 
% complexity of the problem. 
Also, we solve an open problem~\cite{gshr-hsasirp-10}, showing when
every hijack in \SBGP results in an interception.
%when performing a hijacking in \SBGP is equivalent to performing an interception.
%
Tab.~\ref{tab:recap1} summarizes our results. Rows correspond to different settings for a malicious AS $m$. 
The \simpleHijack setting (Sect.~\ref{sect:origin-spoofing}) corresponds to a
scenario where $m$ issues BGP announcements pretending to be the owner of a
prefix. Its degree of freedom is to choose a subset of its neighbors for such a
bogus announcement. This is the most common type of hijacking attack to
BGP~\cite{wikipedia}.
% In \plainBgp (Sect.~\ref{sect:plain-bgp}) $m$ can forge its announcements using a fully fledged BGP.
% but cannot use the so called \ASSET. 
In \SBGP (Sect.~\ref{sect:sbgp}) $m$ must enforce the constraints imposed by
\SBGP, which does not allow to pretend to be the owner of a prefix that is
assigned to another AS. 
%In \assetBgp (Sect.~\ref{sect:asset}) $m$ can use BGP including the \ASSET construct, which 
% allows malicious ASes to perform more effective attacks.
%to specify that a prefix was propagated through a set of ASes rather than a single AS. 
% We studied \ASSET separately because there is discussion about deprecating its usage~\cite{k-dbas-11} 
% and we aimed at estimating its impact on hijacking. 
% As far as we know this is the first study on the usage of \ASSET for hijacking.
% 
\begin{table}[t]
   \centering
   \begin{tabular}{|c|c|c|c|}
      \cline{2-4}
      \multicolumn{1}{l|}{} & { AS-paths of} & { Bounded AS-path} & { Bounded AS-path} \\
      \multicolumn{1}{l|}{} & { any length} & { length} & { length and AS degree} \\
      \hline
      {\bf  \xxxmakefirstuc{\simpleHijack}} &  \NPhard\scriptsize(Thm.~\ref{theo:grexpall_simple-hijack_bgp})&  P\scriptsize(Thm.~\ref{theo:grexpall_plb_origin_bgp})&  P \\
      \hline
%       {\bf  \xxxmakefirstuc{\plainBgp}} &  \NPhard\scriptsize(Thm.~\ref{theo:grexpall_no_asset_ndb_bgp}) &  P\scriptsize(Thm.~\ref{theo:grexpall_no_asset_plb_bgp}) &  P \\
%       \hline
      {\bf  \SBGP} &  \NPhard &  \NPhard\scriptsize(Thm.~\ref{theo:grexpall_plb_sbgp}) &  P\scriptsize(Thm.~\ref{theo:grexpall_plb_ndb_sbgp}) \\
      \hline
%      {\bf  \assetBgp} &  \NPhard&  \NPhard &  \NPhard\scriptsize(Thm.~\ref{theo:grexpall_ndb_plb_asset_bgp}) \\
%      \hline
\end{tabular}
\vspace{1mm} 
   \caption{Complexity of finding a \hijack strategy in different settings.}
   \label{tab:recap1}
\end{table}
Columns of Tab.~\ref{tab:recap1} correspond to different assumptions about the Internet. In the first column we assume that the longest \emph{valley-free} path (i.e.\ a path enforcing certain customer-provider constraints) in the Internet can be of arbitrary length. 
This column has a theoretical interest since the length of the longest path
(and hence valley-free path) observed in the Internet remained constant even
though the Internet has been growing in terms of active AS numbers during the
last 15 years~\cite{h-brtar-11}.
% ~\cite{h-mapl-11}
Moreover, in today's Internet about $95 \%$ of the ASes is reached in $3$ AS
hops~\cite{h-brtar-11}. 
%
%  This column has a pure theoretical interest since in today's Internet about $95 \%$ of the ASes is reached in $3$ AS hops~\cite{h-brtar-11}. 
Hence, the second column corresponds to a quite realistic Internet, where the AS-path length is bounded by a constant. 
In the third column we assume that the number of neighbors of $m$ is bounded by a constant. This is typical in the periphery of the Internet.
% 
% Although in~\cite{bfz-asophaiiti-07} it is shown that ASes higher up in the Internet hierarchy are the most dangerous for hijacking (and such ASes have many neighbors), there is a non-negligible possibility that attacks come from the periphery of the Internet where ASes have few neighbors and where are usually located the newcomers of the Internet, that may have lower level security standards. 
A ``P'' means that a Polynomial-time algorithm exists. Since moving from left to right the setting is more constrained, we prove only the rightmost NP-hardness results, since they imply the NP-hardness results to their left. Analogously, we prove only the leftmost ``P'' results. 
% Sect.~\ref{sect:bgp-models} discusses the adopted model. 
% Because of space limitation, some proofs of lemmas and theorems are omitted 
% and can be found in the appendix.
% together with a pseudo-code description of the algorithms.

% \begin{table*}
%    \centering
%    \begin{tabular}{|c|c|c|c|}
%       \cline{2-4}
%       \multicolumn{1}{l|}{} & {\scriptsize AS-paths of any length} & {\scriptsize Bounded AS-path length} & {\scriptsize Bounded AS-path length and AS degree} \\
%       \hline
%       {\bf \xxxmakefirstuc{\simpleHijack}} & \NPhard & P & P \\
%       \hline
%       {\bf \xxxmakefirstuc{\plainBgp}} & \NPhard & P & P \\
%       \hline
%       {\bf \SBGP} & \NPhard & \NPhard & P \\
%       \hline
%       {\bf \assetBgp} & \NPhard& \NPhard & \NPhard \\
%       \hline
% \end{tabular}
% \vspace{2mm} 
%    \caption{Computational feasibility of finding a \hijack strategy in different settings.}
%    \label{tab:recap1}
% \end{table*}

\remove{

our results: in a realistic model
- finding an hijacking strategy within plain BGP is computationally feasible
- finding an hijacking strategy within S-BGP is computationally hard (gives theoretical evidence of the effectiveness of the adoption of S-BGP)
- explores for the first time the usage of asset for hijacking (experimental results? - binding with draft) hardness
- we prove a conjecture of xxxx showing that in S-BGP every successful hijacking is also an interception

% to be put somewhere? conclusions?

Moreover, \cite{rf-utnlbos-06} find that sophisticated spammers send spam
from IP addresses that corresponds to hijacked prefixes. Such hijacking attacks
correspond to short-lived prefix hijacking which has also been reported
in~\cite{bhb-slphi-06}.

\cite{lsz-irg-08} show that in BGP an AS can improve its utility by deviating
from honest behavior even if No-Dispute-Wheel condition holds. Moreover, they
show that if both No-Dispute-Wheel condition and path-verification condition
holds, then (i) no AS can improve its utility by deviating
from honest behavior and (ii) no coalition of ASes can improve the
outcome of any AS in the coalition without strictly reduce the outcome of
an AS in the coalition. However, in this work a very strict assumption is
done on utility functions of ASes, because they are equal to ranking function.

 \cite{ghjrw-rataifhpaib-08} take into account the previous consideration and
 model malicious AS behavior with more realistic utility functions. They
analyze how to make BGP incentive-compatible when a single malicious AS want to:
(i) increment the amount of traffic traversing its network in order to
eavesdrop or tamper it; (ii) forward more traffic to its customer (which pay for
this transit service). They show some constrains which makes BGP
incentive-compatible in these cases. However, they conclude this work admitting
that these constraints are too strong to be used in the Internet environment. 

% more specific - discuss somewhere

criticism: more specific attacks are more effective
counter: can be trivially spotted; countermeasures can bring the game to be played with prefixes of the same size

criticism: victims can simply react announcing more specific
counter: it takes time (for Pakistan hijacking Google technicians took more than one hour to react); the attacker can repeat the attack using more specific

\cite{kp-siismma-08} shows that it is possible to intercept traffic from a
specific AS by just carefully announcing a more specific prefix than the one
used in the Internet routing, paying attention to protect at least one path
from the manipulator to the correct destination. However ASes can adopt filters
against announces which contains too specific IP prefixes, as proposed
in~\cite{bfmr-asobsias-04}. In this paper we focus on attack where the
manipulator announce a prefix $p$ with the same prefix length of the prefix
announced by the correct originator.

While hijacking a prefix is a trivial task, predicts the effectiveness of such
an attack can be a more tricky problem. First of all, since there is no
knowledge of the policies of every AS in the Internet it is not clear how a
path announcement will propagate across the network. Secondly, it has been
shown in~\cite{gshr-hsasirp-10} that a manipulator that announces the shortest
path to all of its neighbor is not guaranteed to have played the most effective
attack in order to maximize the amount of traffic it attracted.

To the best of our knowledge, this paper presents the first study that formally
examines the computational complexity of finding an attack strategy which allows
a malicious AS to attract traffic from a specific victim AS.
Moreover, this is the first paper which investigate how the \ASSET
attribute can be used in the attack strategy of the manipulator, by showing that
this attribute is a new powerful tool which helps {\bf(both to find peerings
information between different ASes and)} to reduce the AS-PATH length in bogus
announcements. Furthermore, we study how a malicious AS can attract traffic
from a source AS by announcing and collecting different announcements in
various rounds and how two or more ASes can collaborate in order to attract
traffic from a source AS.

In this paper we study the computational complexity of finding an attack strategy to the interdomain routing which allows a malicious ISP to attract traffic from a specific victim ISP.

\footnote{we did not tackle origin-authentication BGP; is it the case?}

Our results are presented in Table~\ref{tab:recap1}. 

Concerning \simpleHijack we only prove the results with no constraints, the one with bounded neighbor degree, and the one with bounded path length. The remaining result is implied by the others. 

Concerning \plainBgp and \SBGP we only prove the results with bounded neighbor degree and the one with bounded path length. The remaining results are implied by the others.

Concerning \assetBgp we only prove the result with bounded neighbor degree and bounded path length. The remaining results are implied by the others.

\footnote{put somewhere}
Sometimes in this paper we bound the degree of the manipulator AS and the longest path in the network. We call these constraints NDB (neighbor-degree bound) and PLB (path-length bound), respectively.

% \begin{table}
%    \centering
%    \begin{tabular}{|c|c|c|c|c|}
%       \cline{2-5}
%       \multicolumn{1}{l|}{} & {\bf No constraints} &  {\bf BND} & {\bf BPL} & {\bf BND+BPL} \\
%       \hline
%       {\bf \xxxmakefirstuc{\simpleHijack}} & \NPhard & P & P & P \\
%       \hline
%       {\bf \xxxmakefirstuc{\plainBgp}} & \NPhard & \NPhard & P & P \\
%       \hline
%       {\bf \SBGP} & \NPhard & P & \NPhard & P \\
%       \hline
%       {\bf \assetBgp} & \NPhard& \NPhard & \NPhard & \NPhard \\
%       \hline
% \end{tabular}
% \vspace{2mm} 
%    \caption{Complexity of the \hijack problem for different
% manipulator
% capabilities. BND: bounded neighbor degree. BPL: bounded path length.}
%    \label{tab:recap1}
% \end{table}

}

% \section{A Model of the Internet Interdomain Routing}\label{sect:bgp-models}
% \section{A Model for BGP Routing and for BGP Hackers}\label{sect:bgp-models}
% \vspace{-0.2cm}
\subsection{A Model for BGP Routing}\label{sect:bgp-models}

% ------------------model-------------------------
% \cite{lsz-irg-08}
As in previous work on interdomain hijacking~\cite{gshr-hsasirp-10}, we model the Internet as a graph $G=(V,E)$. A vertex in $V$ is an \emph{Autonomous System (AS)}. Edges in $E$ are \emph{peerings} (i.e., connections) between ASes. A vertex owns one or more \emph{prefixes}, i.e., sets of contiguous IP numbers. The routes used to reach prefixes are spread and selected via BGP.
%, which is based on the exchange of reachability information (\emph{announcements}) among adjacent vertices
Since each prefix is handled independently by BGP, we focus on a single prefix $\pi$, owned by a destination vertex~$d$.

% \footnote{should be \emph{updates}, which may be announcements or withdrawals} announcement is more explicit

BGP allows each AS to autonomously specify which paths are forbidden (\emph{import policy}), how to choose the best path among those available to reach a destination (\emph{selection policy}), and a  subset of neighbors to whom the best path should be announced (\emph{export policy}).
%
%  Namely, $d$ initializes the routing process by sending \emph{announcements} to (a subset of) its neighbors. Such announcements contain $\pi$ and the \emph{path} of $G$ that should be traversed by the traffic to reach $d$. In the announcements sent from $d$ such a path %is trivial and
%  contains just $d$. Each vertex $v$ receiving an announcement may propagate it to its neighbors, appending $v$ to the announcement.
%
%
BGP works as follows. Vertex $d$ initializes the routing process by sending \emph{announcements} to (a subset of) its neighbors. Such announcements contain $\pi$ and the \emph{path} of $G$ that should be traversed by the traffic to reach $d$. In the announcements sent from $d$ such a path contains just $d$. 
% Vertex $d$ initializes its routing process by announcing to a subset of its neighbors that 
% it is the 
% originator of a prefix $\pi$. 
We say that a path $P=(v_n\ \dots\ v_0)$ is \emph{available} at vertex $v$ if 
%$P$ does not contain $v$ and 
$v_n$ announces $P$ to $v$.
Each vertex checks among its available paths that are not filtered by the import policy, which is the best one according to its selection policy, and then it announces that path to a set of its neighbors in accordance with the export policy.
Further, BGP has a loop detection mechanism, i.e., each vertex $v$ ignores a route if $v$ is already contained in the route itself. 

% This implies that we can consider only simple paths in this work.

Policies are typically specified according to two types of relationships~\cite{h-ipas-99}. In a \emph{customer-provider} relationship, an AS that wants to access the Internet pays an AS which sells this service. In a \emph{peer-peer} relationship two ASes exchange traffic 
% across their networks 
without any money transfer between them. Such commercial relationships between ASes are represented by orienting a subset of the edges of $E$. Namely, edge $(u,v) \in E$ is directed from $u$ to $v$ if $u$ is a customer of $v$, while it is undirected if $u$ and $v$ are peers. A path is \emph{valley-free} if provider-customer and peer-peer edges are only followed by provider-customer edges.

% BGP allows to specify complex policies. However, it has been shown that import, selection, and export policies 
% are typically specified according to two types of relationships between ASes: customer-provider and 
% peer-peer~\cite{h-ipas-99}.\footnote{\bf sibling-sibling?} In a customer-provider relationship there is a 
% customer that wants have access to the Internet and there is a provider which sells this service to the 
% customer. In a peer-peer relationship two peers agree to exchange traffic across their networks without any 
% money transfer between them. We model such commercial relationships between ASes, suitably orienting a subset of 
% the edges of $E$. Namely. edge $(u,v) \in E$ is directed from $u$ to $v$ if $u$ is a customer of $v$ and is 
% undirected if $u$ and $v$ are peers. We say that a path of $G$ contained into an announcement is
% \emph{valley-free} if provider-customer and peer-peer edges are only followed by provider-customer edges.

The Gao-Rexford~\cite{gr-sirgc-00} Export-all (\emph{GR-EA}) conditions are commonly assumed to hold in this setting~\cite{gshr-hsasirp-10}.
\begin{inparablank} 
 \item {\bf GR1}: $G$ has no directed cycles that would correspond to unclear customer-provider roles.
 \item {\bf GR2}: Each vertex $v \in V$ sends an announcement containing a path $P$ to a neighbor
$n$ only if path $(n\ v)P$ is valley-free. 
%A valley is considered an anomaly because it corresponds to an AS providing transit to either its peers or its providers without revenues.\remove{titto:valley undefined}
Otherwise, some AS would provide transit to either its peers or its providers without revenues.
\item {\bf GR3}: A vertex prefers paths through customers over those provided by peers and paths through peers over those provided by providers.
% This captures the idea that an AS has an economic incentive to prefer forwarding 
% traffic via customer (that pays it) over a peer (where no money is exchanged) over 
% a provider (that it must pay). 
% Observe that in the literature it is also studied a variation of this condition, 
% where customer paths 
% are preferred over peer and provider paths, without any distincion between them.
\item {\bf Shortest Paths}: Among paths received from neighbors of the same class (customers, peers, and provider), a vertex chooses the shortest ones.
\item {\bf Tie Break}: 
If there are multiple such paths, a vertex chooses according to some tie break rule. As in~\cite{gshr-hsasirp-10}, we assume that the one whose next hop has lowest AS number is chosen. Also, as in~\cite{es-wairg-11}, to tie break equal class and equal length simple paths $P_1^u=(u\ v)P_1^v$ and $P_2^u=(u\ v)P_2^v$ at the same vertex $u$ from the same neighbor $v$, if $v$ prefers $P_1^v$ over $P_2^v$, then $u$ prefers $P_1^u$ over $P_2^u$. This choice is called \emph{policy consistent} in~\cite{es-wairg-11} and it can be proven that it has the nice property  of making the entire set of policies considered in this paper policy consistent.
%
% If there are multiple such paths, a vertex chooses according to some tie break rule, and several choices have been done to model this in the literature. We assume that tie breaks are solved as follows. Consider equal class and equal length simple paths $P_1^u=(u\ v)P_1^v$ and $P_2^u=(u\ v)P_2^v$. If $v$ prefers $P_1^v$ over $P_2^v$, then $u$ prefers $P_1^u$ over $P_2^u$. This choice is called \emph{policy consistent} in~\cite{es-wairg-11} and it can be proved that it has the nice property \footnote{****check****} of making the entire set of policies policy consistent.
% \item {\bf Tie Break}: If there are multiple such paths, a vertex chooses according to some deterministic tie break rule. In~\cite{gshr-hsasirp-10}, the one who's next hop has lowest AS number is chosen.
% Unfortunately, this is not a complete rule. Indeed, consider equal class and equal length simple paths $P_1^u=(u\ v)P_1^v$ and $P_2^u=(u\ v)P_2^v$. It is unclear how to break this tie.
% In this paper, we assume that tie breaks are \emph{policy consistent}~\cite{es-wairg-11}, i.e., if $v$ prefers $P_1^v$ over $P_2^v$, then $u$ prefers $P_1^u$ over $P_2^u$. 
%
\item {\bf NE policy}: a vertex always exports a path except when GR2 forbids it to do so.
\end{inparablank}

Since we assume that the \GRexpall conditions are satisfied, then a (partially directed) graph is sufficient to fully specify the policies of the ASes. Hence, in the following a \emph{BGP instance} is just a graph.

\remove{
We introduce some useful definition and notation. We say that a path $P=(v_n\ \dots\ v_0)$ is \emph{available} at vertex $v$ if $P$ does not contain $v$ and $v_n$ announces $P$ to $v$. A ranking function $\lambda^v$, determines the level of preference of each path available at vertex $v$. If $P_1, P_2$ are available at $v$ and $\lambda^v(P_1) < \lambda^v(P_2)$ then $P_1$ is \emph{preferred} over $P_2$.
% We introduce some useful notation. A ranking function $\lambda^v: {\cal P} \rightarrow \mathbb{N}$, determines
% the level of preference of each path available at vertex $v$. If $P_1, P_2 \in {\cal P}$ and $\lambda^v(P_1) <
% \lambda^v(P_2)$ then $P_1$ is \emph{preferred} over $P_2$. 
%
% If $P_1, P_2 \in {\cal P}$ and $\lambda^v(P_1) = \lambda^v(P_2)$ then $P_1=(u)P_1$ and $P_2=(u)P_2$, where $u$
% is a neighbor of $v$.
%
The \emph{concatenation} of two nonempty paths $P=(v_k\ v_{k-1}\ \dots\ v_i)$, $k \geq i$, and $Q=(v_i\ v_{i-1}\ \dots\ v_0)$, $i \geq 0$, denoted as $PQ$, is the path $(v_k\ v_{k-1}\ \dots\ v_i\ v_{i-1}\ \dots\ v_0)$.
Also, let $P$ be a valley-free from vertex $v$. We say that $P$ \emph{is of class} $3$, $2$, or $1$ if it passes through a customer, a peer, or a provider of $v$, respectively. We also define a function $f^v$, defined for each vertex $v$, that maps each path from $v$ to the integer of its class. Given two paths $P$ and $P'$ available at $v$ if $f^v(P)>f^v(P')$ we say that the class of $P$ \emph{is better than} the class of $P'$.
}

% Also, we define a function $f^v$, defined for each vertex $v$, that maps each path $P$ available at $v$ to an
% integer. Function $f$ assigns to paths that pass through a customer, a peer, or a provider of $v$ a value of
% $3$, $2$, or $1$, respectively.

% Each vertex $v$ has a ranking function $\lambda^v: {\cal P} \rightarrow
% \mathbb{N}$, which determines the level of preference of each ingoing path at vertex $v$. If $P_1, P_2 \in {\cal P}$ and $\lambda^v(P_1) < \lambda^v(P_2)$
% then $P_1$ is \emph{preferred} over $P_2$. If $P_1, P_2 \in {\cal P}$ and $\lambda^v(P_1) = \lambda^v(P_2)$ then $P_1=(u)P_1$ and $P_2=(u)P_2$, where $u$
% is a neighbor of $v$. Unreachability is the less preferred choice at each vertex $v$ in $G$. The \emph{concatenation} of two nonempty paths $P=(v_k\ v_{k-1}\
% \dots\ v_i)$, $k \geq i$, and $Q=(v_i\ v_{i-1}\ \dots\ v_0)$, $i \geq 0$, denoted as $PQ$, is the path $(v_k\ v_{k-1}\ \dots\ v_i\ v_{i-1}\ \dots\ v_0)$.
% 
% We define a function $f^v$ defined for each vertex $v$ that maps each path $P$ to an integer. Function $f$ assigns to paths that pass through a customer, a peer, and a provider a value of $3$, $2$, and $1$, respectively.
% \vspace{-0.3cm}
\subsection{Understanding Hacking Strategies}\label{sect:bgp-models}

We consider the following problem. A BGP instance with three specific vertices, $d$, $s$, and $m$ are given, where such vertices are: the AS originating a prefix $\pi$, a source of traffic for $\pi$, and an attacker, respectively.
All vertices, but $m$, behave correctly, i.e., according to the BGP protocol and \GRexpall conditions. Vertex $m$ is interested in two types of attacks: \emph{hijacking} and \emph{interception}. In the hijacking attack $m$'s goal is to attract to itself at least the traffic from $s$ to $d$.  In the interception attack $m$'s goal is to be traversed by at least the traffic from $s$ to $d$. 

% Depending on its cleverness, manipulator $m$ can manipulate BGP announcements in several ways.  
% It can announce to its neighbors that it is the originator of $\pi$ (that it does not own), 
% or a path to $d$ that it has never received, or a path received from a provider or from 
% a peer to one of its providers or one of its peers (invalidating the GR2 condition), or 
% even different paths to different neighbors.

% Manipulator $m$ can manipulate BGP announcements in several ways. It can announce to its neighbors:
% %
% \begin{inparablank}
% \item That it is the originator of $\pi$ (that it does not own).
% \item A path to $d$ that it has never received.
% \item A path received from a provider or from a peer to one of its providers or one of its peers, invalidating the GR2 condition.
% \item Different paths to different neighbors.
% \end{inparablank}

% Denote by $n_1, \dots, n_t$ the neighbors of $m$. An \emph{attack} of $m$ is the 
% sequence of paths $<q^1,\dots,q^t>$ that $m$ announces to its neighbors to attract 
% traffic from $s$. Observe that $q^1,\dots,q^t$ are arbitrary sequences of vertices and they are not necessarily paths in $G$.

In Fig.~\ref{fig:example_of_attack} $(2,6)$ is peer-to-peer and the other edges are customer-provider. Prefix $\pi$ is owned and announced by $d$. According to BGP, the traffic from $s$ to $d$ follows $(s\ 6\ 2\ 1\ d)$. In fact, $2$ selects  $(1\ d)$. Vertex $6$ receives a unique announcement from $d$ (it cannot receive an announcement with $(5\ 4\ 3\ m\ 2\ 1\ d)$ since it is not valley-free).
By cheating, ({\bf Example 1}) $m$ can deviate the traffic from $s$ to $d$ attracting traffic from $s$. In fact, if $m$ pretends to be the owner of $\pi$ and announces it to $2$, then $2$ prefers, for shortest-path, $(2\ m)$ over $(2\ 1\ d)$. Hence, the traffic from $s$ to $d$ is received by $m$ following $(s\ 6\ 2\ m)$. A hijack!

% When $m$ disrupts a path because of the ``shortest-path'' criterium, we say that $m$ disrupts a path within the
% \emph{same class}.\footnote{bad definition}

% Observe that if a vertex knows for some reasons that prefix $\pi$ is owned by $d$, then it can easily detect 
% an anomaly in the bogus announcement sent by $m$. 

Observe that $m$ could be smarter ({\bf Example 2}). Violating GR2, it can announce $(2\ 1\ d)$ to $3$. Since each of $3$, $4$ and $5$ prefers paths announced by customers (GR3), the propagation of this path is guaranteed. Therefore, $6$ has two available paths, namely, $(2\ 1\ d)$ and $(5\ 4\ 3\ m\ 2\ 1\ d)$. The second one is preferred because $5$ is a customer of $6$, while $2$ is a peer of $6$. Hence, the traffic from $s$ to $d$ is received by $m$ following path $(s\ 6\ 5\ 4\ 3\ m)$. Since after passing through $m$ the traffic reaches $d$ following $(m\ 2\ 1\ d)$ this is an interception.

% \begin{figure}[t]
% % \centering
% \begin{minipage}[t]{0.49\textwidth}%
% % \vspace{0.3in}
% \includegraphics[width=1\textwidth]{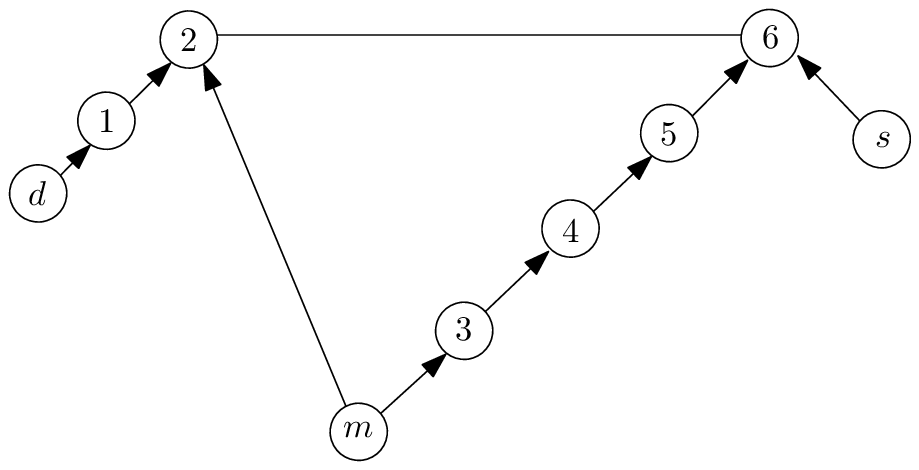}
% \caption{A network for Examples 1 and 2.}\label{fig:example_of_attack}
% \end{minipage}%
% \hfill
% \begin{minipage}[t]{0.49\textwidth}%
% \includegraphics[width=1\textwidth]{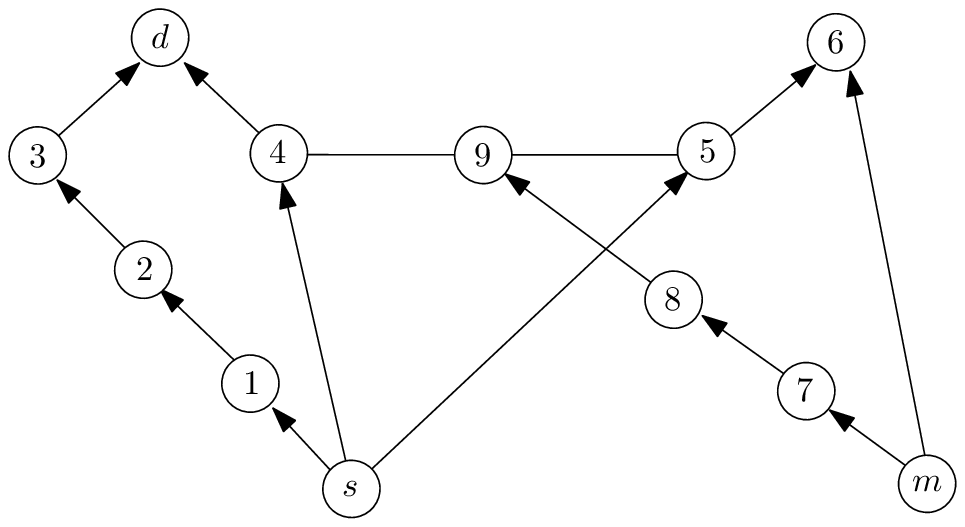}
% \caption{A network for Example 3.}\label{fig:example_of_attack_2}
% \end{minipage}
% \end{figure}

\begin{figure}[t]
% \centering
\begin{minipage}[t]{0.49\textwidth}%
% \vspace{0.3in}
\includegraphics[width=1\textwidth]{figures/example_attacks}
\caption{A network for Examples 1 and 2.}\label{fig:example_of_attack}
\includegraphics[width=1\textwidth]{figures/example_attacks_2}
\caption{A network for Example 3.}\label{fig:example_of_attack_2}
\end{minipage}%
\hfill
\begin{minipage}[t]{0.49\textwidth}%
%\includegraphics[width=1\textwidth]{}
% \includegraphics[width=1\textwidth]{figures/example_attacks_2}
% \caption{**** REMOVE?.}\label{fig:example_of_attack_2}
\vspace{-30mm} % adjust horizontal alignment between images
\includegraphics[width=1\textwidth]{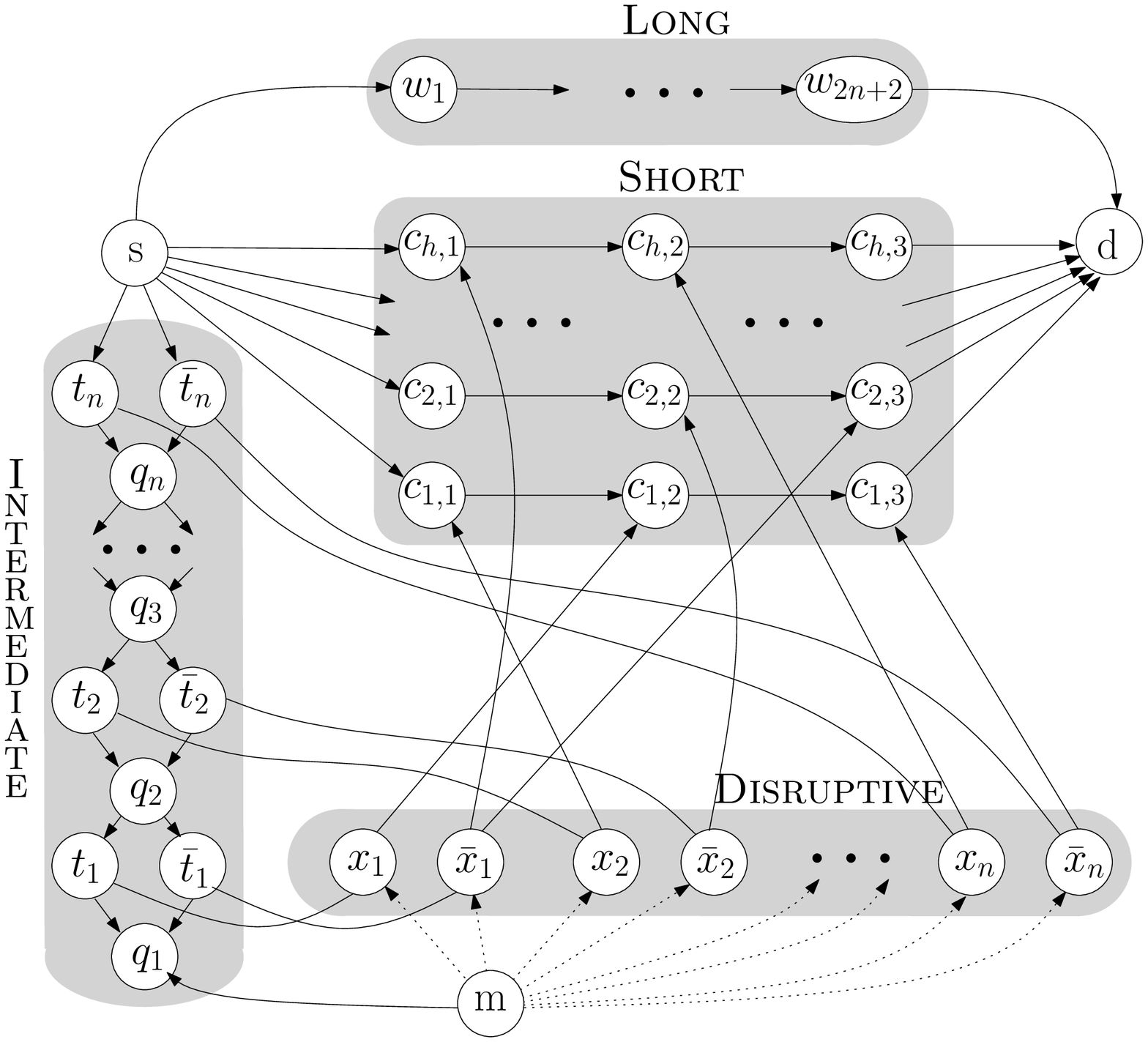}
\caption{Reduction of the \threesat problem to the \hijack problem when $m$ has \simpleHijack capabilities. Dotted lines from $m$ to vertices $x_i$ and $\bar x_i$ have length $2n+2$.}\label{fig:bgp_origin_spoofing_bgp_grexpall}
\end{minipage}
\end{figure}

% When $m$ disrupts a path because of the GR3 condition (i.e., paths through customers are preferred over paths through
% peers that are preferred over paths through providers), we say that $m$ disrupts a path with the \emph{better
% class}\footnote{``for better class''??}. 

Fig.~\ref{fig:example_of_attack_2} allows to show a negative example ({\bf Example 3}). According to BGP, the traffic from $s$ to $d$ follows $(s\ 4\ d)$. In fact, $s$ receives only paths $(4\ d)$ and $(1\ 2\ 3\ d)$, both from a provider, and prefers the shortest one. 
Suppose that $m$ wants to hijack and starts just announcing $\pi$ to $6$. Since all the neighbors of $s$ are providers, $s$ prefers, for shortest path, $(4\ d)$ over $(5\ 6\ m)$ (over $(1\ 2\ 3\ d)$ over $(4\ 9\ 8\ 7\ m)$) and the hijack fails. 
But $m$ can use another strategy.
Since $(s\ 5\ 6\ m)$ is shorter than $(s\ 1\ 2\ 3\ d)$, $m$ can attract traffic if $(4\ d) $ is ``disrupted'' and becomes not available at $s$. 
This happens if $4$ selects, instead of $(d)$, a path received from its peer neighbor $9$ ($m$ may announce that it is the originator of $\pi$ also to $7$). 
However, observe that if $4$ selects path $(4\ 9\ 8\ 7\ m)$ then $5$ selects path $(5\ 9\ 8\ 7\ m)$ since it is received from a peer and stops the propagation of $(s\ 5\ 6\ m)$.  Hence, $s$ still selects path $(s\ 1\ 2\ 3\ d)$ and the hijack fails. 
%
% However, in order to avoid such a collateral effect, $m$ can announce to $7$ path $(m\ 5)$ (that does not exist). Then, $5$ is forced to discard path $(9\ 8\ 7\ m\ 5)$ because of loop-detection, and selects path $(5\ 6\ m)$. Hence, since path $(4\ d)$ becomes not available at $s$, traffic from $s$ to $d$ is sent through $(s\ 5\ 6\ m)$ and the hijack succeeds.

% % % % %  REMOVED EXAMPLE PLAIN BGP % % % % % % %
\remove{
Here is a more sophisticated hijack ({\bf Example 3}, Fig.~\ref{fig:example_of_attack_2}). According to BGP, the traffic from $s$ to $d$ follows $(s\ 4\ d)$. In fact, $s$ receives only paths $(4\ d)$ and $(1\ 2\ 3\ d)$, both from a provider, and prefers the shortest one. 
Suppose that $m$ wants to hijack and starts just announcing $\pi$. Since all the neighbors of $s$ are providers, $s$ prefers path $(4\ d)$ over $(5\ 6\ m)$ (over $(1\ 2\ 3\ d)$ over $(4\ 9\ 8\ 7\ m)$) and the hijack fails. 
But $m$ can do better.
Since $(s\ 5\ 6\ m)$ is shorter than $(s\ 1\ 2\ 3\ d)$, $m$ can attract traffic if $(4\ d) $ is ``disrupted'' and becomes not available at $s$. 
This happens if $4$ selects, instead of $(d)$, a path received from its peer neighbor $9$ ($m$ may announce to $7$ that it is the originator of $\pi$). 
However, observe that if $4$ selects path $(4\ 9\ 8\ 7\ m)$ then $5$ selects path $(5\ 9\ 8\ 7\ m)$ since it is received from a peer and stops the propagation of $(s\ 5\ 6\ m)$.  Hence, $s$ still selects path $(s\ 1\ 2\ 3\ d)$ and the hijack fails. 
However, in order to avoid such a collateral effect, $m$ can announce to $7$ path $(m\ 5)$ (that does not exist). Then, $5$ is forced to discard path $(9\ 8\ 7\ m\ 5)$ because of loop-detection, and selects path $(5\ 6\ m)$. Hence, since path $(4\ d)$ becomes not available at $s$, traffic from $s$ to $d$ is sent through $(s\ 5\ 6\ m)$ and the hijack succeeds.
% the traffic from $s$ to $d$ is received by $m$ following path $(s\ 5\ 6\ m)$.
}

In order to cope with the lack of any security mechanism in BGP, several variations of the protocol have been proposed by the Internet community. One of the most famous, \SBGP, uses both origin authentication and cryptographically-signed announcements in order to guarantee that an AS announces a path only if it has received this path in the past.

%BGP does not include any security mechanism in order to check if a routing announcement is valid or not. Hence, the Internet community studied several variations of BGP to improve its security. One of the most famous, \SBGP, uses both origin authentication and cryptographically-signed announcements in order to guarantee that an AS announces a path only if it has received this path in the past. This means that a vertex $v$ can announce a path $(v\ u)P$ only if vertex $u$ is a neighbor of $v$ and it announced $P$ to $v$. Moreover, a prefix cannot be originated by malicious ASes. Although not yet deployed, \SBGP is an important reference point for studying the security of interdomain routing~\cite{lsz-irg-08}\cite{ghjrw-rataifhpaib-08}\cite{gshr-hsasirp-10}.

The attacker $m$ has more or less constrained \emph{cheating capabilities}. 
\begin{inparaenum}
\item With the \emph{\simpleHijack} cheating capabilities $m$ can do the typical BGP announcement manipulation. I.e., $m$ can  pretend to be the origin of  prefix $\pi$ owned by $d$, announcing this to a subset of its neighbors. 
% $m$ has \simpleHijack cheating capabilities.
%
% \item With the \emph{\plainBgp} cheating capabilities $m$ can announce to its neighbors any (even utterly fabricated) path. It can also decide to announce different paths to different neighbors. In Examples 1 and 3 $m$ has \plainBgp cheating capabilities. This type of manipulation has been performed in the Pakistan~\cite{u-phy-08} accident.
%
\item With the \emph{\SBGP} cheating capabilities $m$ must comply with the \SBGP constraints. I.e.: (a) $m$ cannot pretend to be the origin of  prefix $\pi$; and (b) $m$ can announce a path $(m\ u)P$ only if $u$ announced $P$ to $m$ in the past. However, $m$ can still announce paths that are not the best to reach $d$ and can decide to announce different paths to different neighbors. In Example 2, $m$ has \SBGP cheating capabilities.
%
% \item With the \emph{\assetBgp} cheating capabilities $m$ can exploit the following technicality of 
% BGP~\cite{rfc4271} that is seldom used in the Internet (see, e.g., \cite{ripe-11}). In BGP an AS can 
% replace one of the ASes of a path with an \emph{\ASSET}. An \ASSET is a set of ASes and counts one in 
% the computation of the length of the path. It is used to say that an announcement passed through one of 
% the ASes of the set without saying explicitly which one. If an AS receives a path that contains its 
% identifier in an \ASSET, it discards the announcement for loop detection. 
% Since the Internet community is discussing the possibility of deprecating the \ASSET~\cite{k-dbas-11}, we discuss this capability separately. %However, accessing any Internet looking glass (see, e.g., \cite{ripe-11}) it is still possible to observe several announcement using \ASSET. Also, all router vendors support \ASSET in their BGP software.
%
\end{inparaenum}

% ------------------end model-------------------------

In this paper we study the computational complexity of the \hijack and of the
\interception problems. The \hijack problem is formally defined as follows. {\bf
Instance}: A BGP instance $G$, a source vertex $s$, a destination vertex $d$, a
manipulator vertex $m$, and a cheating capability for $m$. {\bf Question}: Does
there exist a set of announcements that $m$ can simultaneously send to its
neighbors, according to its cheating capability, that produces a stable state
for $G$ where the traffic from $s$ to $d$ goes to $m$?
The \interception problem is defined in the same way but changing ``the traffic from $s$ to $d$ goes to $m$'' to ``the traffic from $s$ to $d$ passes through $m$ before reaching $d$''.

% ---- more specific ---- 
% is it of any interested for the STACS audience ? 

% One may say that these types of attacks are harmless, since $d$, once spotted the attack, can easily solve the problem by splitting $\pi$ into more specific and announcing those more specific. This is what happened in February 2008~\cite{u-phy-08}. However, taking such countermeasure took more than one hour. Also, $m$ may react starting announcing the more specific. Hence, the same problem that we study would be posed on the new prefixes. One the other hand one could say that a more powerful attack for $m$ would be to announce prefixes that are more specific than $\pi$. Again, $d$ could react with even more specific and there is a point in this ``race to more specific'' where the problem we study is unavoidably posed.

\subsection{Notation and Definitions}
%
% DEFINITION AND NOTATION 
%
We introduce some technical notation in order to prove our lemmas and theorems. A \emph{ranking function} determines the level of preference of paths available at vertex $v$. If $P_1, P_2$ are available at $v$ and $P_1$ is \emph{preferred} over $P_2$ we write $P_1<_\lambda^v P_2$.
The \emph{concatenation} of two nonempty paths $P=(v_k\ v_{k-1}\ \dots\ v_i)$, $k \geq i$, and $Q=(v_i\ v_{i-1}\ \dots\ v_0)$, $i \geq 0$, denoted as $PQ$, is the path $(v_k\ v_{k-1}\ \dots\ v_{i+1}\ v_i\ v_{i-1}\ \dots\ v_0)$.
Also, let $P$ be a valley-free path from vertex $v$. We say that $P$ \emph{is of
class} $3$, $2$, or $1$ if its first edge connects $v$ with a customer, a peer,
or a provider of $v$, respectively.
We also define a function $f^v$ for each vertex $v$, that maps each path from
$v$ to the integer of its class. Given two paths $P$ and $P'$ available at $v$
if $f^v(P)>f^v(P')$ we say that the class of $P$ \emph{is better than} the class
of $P'$.
In stable routing state $S$, a path $P=(v_1\ \dots\ v_n)$ is \emph{disrupted at vertex} $v_i$ by a path $P'$ if there exists a vertex $v_i$ of $P$ such that $v_i$ selects path $P'$.
%  over $(v_i\ \dots\ v_n)$ and selects $P'$ and (ii) $v_{i+1}$ selects $(v_{i+1}\ \dots\ v_n)$.
%
Also, if $P'$ is preferred over $(v_i\ \dots\ v_n)$ because of the GR3 condition, we say that path $P$ \emph{is disrupted by a path of a better class}. 
Otherwise, if $P'$ is preferred over $(v_i\ \dots\ v_n)$ because of the shortest-paths criterion, we say that  $P$ is disrupted by a path of the \emph{same class}. 
% 

% ---- stability ----

% \subsection{Routing Stabilty: Manipulators Cannot Make a Network Unstable}
\subsection{Routing Stability under Manipulator Attacks}
BGP policies can be so complex that there exist configurations that do not allow to reach any stable routing state (see, e.g., \cite{gsw-sppir-02}). A routing state is \emph{stable} if there exists a time $t$ such that after $t$ no AS changes its selected path. If the \GRexpall conditions are satisfied~\cite{gr-sirgc-00}, then a BGP network always converges to a stable state. However, there is a subtle issue to consider in attacks. As we have seen in the examples, $m$ can deliberately ignore the \GRexpall conditions. %Creating permanent oscillations in the network can, of course, be a target of a malicious agent. However, this is not the objective of $m$, that is interested either in hijacking or in interception. Also, provoking an oscillation could be counterproductive for $m$ since it could potentially show its malicious intents. Hence, $m$ must manipulate announcements so that after manipulation the network reaches a stable routing state. 
Anyway, the following lemma makes it possible, in our setting, to study the \hijack and the \interception problem ignoring stability related issues. First, we introduce some notation.

\newcommand{\lemmaStableState}{
Let $G$ be a BGP instance and suppose that at a certain time a manipulator $m$ starts announcing steadily any set of arbitrary paths to its neighbors. Routing in $G$ converges to a stable state.}

\begin{lemma}\label{lemma:grexpall_plain_stable_state}
\lemmaStableState
\end{lemma}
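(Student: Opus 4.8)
The plan is to exploit the one feature that distinguishes this situation from an ordinary run of BGP: once $m$ starts announcing \emph{steadily}, its output on each edge to a neighbor is \emph{constant} in time. So I would treat $m$ not as a participant that reacts to the routes it receives, but as a fixed boundary source: each neighbor $n$ of $m$ permanently sees, on the edge to $m$, a route of a fixed class (determined by the orientation of the edge $n$--$m$, hence class $3$, $2$, or $1$ according to whether $m$ is a customer, peer, or provider of $n$) and of fixed length and fixed \ASpath content. All remaining vertices obey BGP and the \GRexpall conditions. Thus the system is exactly a \GRexpall instance in which, besides $d$ originating $\pi$, there are finitely many constant route injections at the vertices adjacent to $m$. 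Convergence then reduces to the Gao-Rexford convergence theorem, extended to accommodate these extra constant sources.

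The engine of the proof is the standard staged argument along the acyclic customer--provider order guaranteed by GR1. First I would let the class-$3$ (customer) routes stabilize: a class-$3$ route at $v$ has a customer $u$ as next hop and, by valley-freeness, is entirely downhill, so it is built only from class-$3$ routes of customers of $v$, which lie strictly lower in the GR1 DAG; together with $m$'s constant class-$3$ injections these propagate upward with no feedback and settle in finitely many rounds, and by GR3 they are never subsequently displaced. Next the class-$2$ (peer) routes stabilize, since each depends only on the now-fixed class-$3$ routes of peers (plus $m$'s constant class-$2$ injections) and is computed in one round. Finally the class-$1$ (provider) routes stabilize: such a route at $v$ uses a provider $p$ and $p$'s already-chosen route, so processing vertices in the reverse (downhill) topological order of the GR1 DAG, and using GR2 (a vertex exports everything to its customers but only downhill routes to peers and providers) together with the fixed class-$3$/class-$2$ routes and $m$'s constant class-$1$ injections, these settle with no feedback as well.

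The hard part will be ruling out that $m$'s \emph{arbitrary} announcements -- which may be non-valley-free, of spoofed length, or referring to fabricated \ASpath data -- induce a persistent oscillation among the honest vertices. The key observation I would rely on is that $m$ never changes its announcements, so it can never be an \emph{active} vertex in a destabilizing configuration (a dispute wheel): every vertex that keeps changing its route must be honest, and the honest vertices retain the full \GRexpall structure (GR1 acyclicity, GR2 valley-free export, GR3 class preference together with shortest-path and the policy-consistent tie break), under which no dispute wheel exists. The fabricated content of $m$'s paths only affects path \emph{length} and \emph{loop detection}, i.e.\ which of the already-classified routes survive at each honest vertex; it cannot reorder the three classes nor break the acyclicity on which each phase rests. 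Hence each phase terminates and the whole system converges. A more syntactic alternative would replace $m$ by honest gadgets that reproduce, per neighbor, a route of the same class, length, and \ASpath, and then invoke Gao-Rexford verbatim; I would avoid this, because faithfully reproducing loop detection on spoofed \ASpath data is fiddly, whereas the constant-source argument sidesteps it entirely.
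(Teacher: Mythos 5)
Your proposal is sound, but it proves the lemma by a genuinely different route than the paper. The paper argues by contradiction: if routing oscillates after $m$ becomes steady, it extracts (via the Griffin--Shepherd--Wilfong machinery it cites) a dispute wheel whose ``active'' vertices $u_0,\dots,u_n$ all change their selected routes infinitely often; since $m$ and $d$ announce steadily they cannot be among these, and the wheel is then killed by two cyclic chains of inequalities --- the classes $f^{u_i}(\cdot)$ are non-decreasing around the wheel by GR2/GR3, so either some step is strict (yielding $f^{u_m}(P^{u_m})>f^{u_m}(P^{u_m})$) or all comparisons are by shortest path, in which case $|Q_{i-1}|\ge|P^{u_i}|>|Q_i|$ around the cycle yields $|Q_i|>|Q_i|$. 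Your primary argument is instead the constructive, Gao--Rexford-style staged induction over the GR1 DAG, with $m$ frozen into a constant boundary source: customer routes stabilize bottom-up, then peer routes, then provider routes top-down. This is correct --- the crucial shared insight is the same (a steady $m$ cannot participate in any oscillation), and the fabricated tails of $m$'s announcements indeed enter only through time-invariant filters (length, loop detection) --- and it buys more than the paper's proof: an explicit three-phase convergence schedule rather than bare existence of a stable state, at the price of the per-phase case analyses you only sketch (e.g., that a vertex's exports to peers and providers depend solely on its class-3 data, so later phases cannot perturb earlier ones). One caution: your third paragraph is redundant once the staged argument is carried out in full, and as written it compresses the paper's entire proof into the unproven assertion that ``no dispute wheel exists'' among honest vertices; that assertion cannot simply be imported from the standard Gao--Rexford results, because the rim of a candidate wheel may pass through the steady $m$, whose announced tails have spoofed lengths and contents, so one must re-verify --- exactly as the paper does with the two inequality chains --- that the class and length comparisons rely only on the honest vertices' GR2/GR3-compliant behavior and on consistent announced-length bookkeeping. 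Either commit to the staged proof and discharge its case analyses, or commit to the dispute-wheel proof and spell out the inequalities; the hybrid as stated leaves the decisive step asserted rather than proved.
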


\begin{proof}
Suppose, for a contradiction, that, after $m$ starts its announcements, routing in $G$ is not stable. 
Let $u_0,\dots,u_n$ be a circular sequence of vertices such that: (1) each $u_i$ does not steadily announce a path; (2) the most preferred path $P^{u_i} = R_i Q_i = (r_1^{u_i}\ \dots\ r_{k_i}^{u_i})Q_{i} $ at $u_i$ that is available infinitely many times is such that each vertex in $Q_i$ but $r_{k_i}^{u_i}$ steadily announces a path; and (3) $r_{k_i}^{u_i}=u_{i+1}$, where $i$ has to be interpreted modulo $n$. Such a circular sequence is called dispute-wheel and it has been proved in~\cite{gsw-pdpvp-99} that if a system (with no manipulators) is not stable then it contains a dispute-wheel. We prove that the presence of a dispute-wheel in a GR-EA instance leads to a constradiction. Hence, a GR-EA instance always converges to a stable state. Observe that $|R_i| \geq 2$, otherwise $u_i$ would be stable. Since $m$ and $d$ steadily announce some paths and because $G$ is finite, such a sequence exists.
Observe that for each $i$, we have that $Q_{i-1}>^{u_i}_\lambda P^{u_{i}}$.

Suppose that for each $u_i$ we have that $P^{u_{i}}$ is preferred over $Q_{i-1}$ either by shortest path or by tie-break, i.e., $f^{u_{i}}(Q_{i-1}) = f^{u_{i}}(P^{u_{i}})$ and 
$|Q_{i-1}|\ge|P^{u_{i}}|$. Inequality $|R_{i}| \geq 2$ implies $|P^{u_{i}}|>|Q_{i}|$. Hence, we have $|Q_{i-1}|\ge|P^{u_{i}}|>|Q_{i}|\ge|P^{u_{i+1}}|$. Following the cycle of inequalities we have a contradiction as we obtain $|Q_{i}|>|Q_{i}|$. 

Conversely, let $u_m$ be a vertex that prefers $P^{u_{m}}$ over $Q_{m-1}$ for better class, that is $f^{u_{m}}(P^{u_{m}}) > f^{u_{m}}(Q_{m-1})$. For each $i=0,\dots,m-1,m+1,\dots,n$ we have $f^{u_{i}}(P^{u_{i}}) \ge f^{u_{i}}(Q_{i-1}) \ge f^{u_{i-1}}(P^{u_{i-1}})$ because of the GR2 and GR3 conditions. Following the cycle of inequalities we have a contradiction as we obtain $f^{u_{m}}(P^{u_{m}}) >f^{u_{m}}(P^{u_{m}})$.
\qed
\end{proof}

% \vspace{-0.2cm}
The existence of a stable state (pure Nash equilibrium) in a game where one player can deviate from a standard behavior has been proved, in a different setting in~\cite{es-wairg-11}. Such a result and Lemma~\ref{lemma:grexpall_plain_stable_state} are somehow complementary since the export policies they consider are more general than Export-All, while the convergence to the stable state is not guaranteed (even if such a stable state is always reachable from any initial state).

\remove{
The \emph{Border Gateway Protocol} (BGP) \cite{rfc4271} is the current widely
adopted interdomain routing protocol in the Internet. Despite its critical role
in the Internet architecture, BGP has several security
vulnerabilities~\cite{rfc4272}. 

Nowadays an increasing number of critical business rely on the Internet
infrastructure to work correctly. Fox example, Internet hosts e-mail services
and online banking applications, whose interruption can cause serious
consequences to end users.

Since in BGP no mechanism has been adopted to support truthfully exchange of
reachability information between different Autonomous Systems (ASes), an AS has
no guarantee that its traffic is forwarded to the correct destination on the
expected path. Beyond malicious ASes~\cite{u-phy-08}, Internet is also prone to
router misconfiguration errors which can lead the entire routing system to
unwanted behavior~\cite{mwa-ubm-02}. 

\remove{
Furthermore, BGP is known to converge to a stable routing slowly by exchanging
possibly a significant number of messages between BGP speakers. Unfortunately,
BGP can not converge to a stable route at all~\cite{vge-proir-00}. Analyzing any
property about stable routing in BGP was proved to be complex
\cite{gw-abcp-99}\cite{ccdv-ltpcbst-11}.
}

Moreover, \cite{rf-utnlbos-06} find that sophisticated spammers send spam
from IP addresses that corresponds to hijacked prefixes. Such hijacking attacks
correspond to short-lived prefix hijacking which has also been reported
in~\cite{bhb-slphi-06}.

Hijack attacks varies accordingly with manipulator's goal. If the attacker want
to eavesdrop traffic without be detected, then it is forced to prevents its
bogus path to disrupt every path that it has established to the correct
destination. On the contrary, if an attacker want to cause a denial-of-service
attack, it does not bear in mind the above issue.

\cite{kp-siismma-08} shows that it is possible to intercept traffic from a
specific AS by just carefully announcing a more specific prefix than the one
used in the Internet routing, paying attention to protect at least one path
from the manipulator to the correct destination. However ASes can adopt filters
against announces which contains too specific IP prefixes, as proposed
in~\cite{bfmr-asobsias-04}. In this paper we focus on attack where the
manipulator announce a prefix $p$ with the same prefix length of the prefix
announced by the correct originator.

While hijacking a prefix is a trivial task, predicts the effectiveness of such
an attack can be a more tricky problem. First of all, since there is no
knowledge of the policies of every AS in the Internet it is not clear how a
path announcement will propagate across the network. Secondly, it has been
shown in~\cite{gshr-hsasirp-10} that a manipulator that announces the shortest
path to all of its neighbor is not guaranteed to have played the most effective
attack in order to maximize the amount of traffic it attracted.

In order to cope with these security vulnerabilities, several countermeasures
has been proposed by the community. A simple solution involves a trusted central
authority which certifies that an AS is the rightful owner for an IP
prefix~\cite{aim-oair-03}. This technology has not yet been adopted by the
Internet community due to both the lack of incentive for its usage and to its
recent development (e.g. RIPE NCC releases its resource certification system in
January 2011 \cite{rncc-rnrs-11}). However, a malicious AS is still able to
announce a bogus path where it is directly connected to the rightful originator
AS. 

More sophisticated solutions like \SBGP~\cite{kls-sbgp-00} requires a PKI
infrastructure both to validates the correctness of the AS who claims to
announce an IP prefix and to allow an AS to sign its announcements to other
ASes. In this settings an AS cannot announce paths that it has never received
from its neighbors. 

An analogous solution, which requires less overhead, is
\SoBGP~\cite{n-ebssob-04}, where a distribute PKI is used to certify prefix
ownership and to guarantee the existence of a BGP peering between two ASes. In
this setting an AS can announce only paths that match the network topology.
Unfortunately, both in \SBGP and \SoBGP it is required that all ASes in the
network run and implementation of the protocol itself.

Among all of the different kinds of attacks to the BGP routing protocol, in
this paper we focus on attacks based on the propagation of bogus routing
information.
Basically, we study how a malicious Autonomous System (AS) can take advantage of
announcing invalid routing information in order to attract traffic from
other ASes. We investigate the computational complexity of this problem by
varying both the security protocol used and the model of the ISP's policies. As
a result, we show that compute a strategy that permits to an AS to attract
traffic from another AS is computationally intractable.

Moreover, we exploit the \ASSET attribute in order to produce more powerful
attacks {\bf(and to find peerings information between different ASes)}.
This fact strengthen a recent Internet Draft that deprecates the \ASSET attribute
from BGP~\cite{k-dbas-11}.

Furthermore, we show an instance of a network where a manipulator
can not attract traffic from a source AS with a single announcement, while it
can attract attract traffic by announcing and collecting different messages. 
}

\remove{
In Section~\ref{sect:related-work} we discuss the most relevant related work to
attack strategies in BGP. In Section~\ref{sect:bgp-models} we define our models
used to represent a BGP system. In Section~\ref{sect:attacks} we show how hard
it is to check if a malicious AS can attract traffic from another specific AS.
In Section~\ref{sect:more_staff} we propose some new open research problems.  
}

% ---------------old version --------------------

\remove{

\textbf{basic BGP - graph, update, import-selection-export, policy, filter, ranking}\\

\noindent
The Internet is modeled as a graph $G=(V,E)$, where vertices in $V$ correspond to Autonomous Systems (ASes) and edges in $E$ represent peerings (i.e. connections) between ASes. Each vertex owns one or more \emph{prefixes}, i.e.\ sets of contiguous IP numbers. The routes chosen to reach a prefix are spread and selected via the BGP protocol~\cite{rfc4271}, which is based on the asynchronous exchange of reachability information (\emph{announcements}) among adjacent vertices. Since each prefix is handled independently by BGP, we can focus on a single prefix $p$, own by a destination vertex $d$. 

BGP is a path-vector protocol. Namely, vertex $d$ initializes the routing process by sending announcements to (a subset of) its neighbors. Such announcements contain $p$ and the entire path of $G$ that should be traversed from the traffic to reach $d$. In the announcements sent from $d$ such a path is trivial and contains just $d$. Each vertex $v$ receiving an announcements may, in its turn, propagate it to its neighbors, appending $v$ into the announcement.

BGP allows each AS to autonomously specify which paths are forbidden (\emph{import policy}), how to choose the best path among those available to reach a destination (\emph{route selection policy}), and the set of neighbors to whom the best path should be announced (\emph{export policy}). 

More formally, BGP dynamics are as follows. Vertex $d$ initializes its routing process by announcing to a subset of its neighbors that it is the originator of a prefix $p$. Asynchronously, each vertex checks among the paths learned from its neighbors that are not filtered out by the import policy, which is the best ranked path, and then it announces that path to a set of its neighbor in accordance with the export policy. 
Further, BGP has a mechanism of loop detection, i.e.\ each vertex $v$ ignores a route if $v$ is already contained in the route itself. This implies that we can consider only simple path in this work.

\textbf{gao-rexford - conditions, graph orientation; export-all variation}\\

BGP allows ASes to specify very complex policies. However, it has been shown that import, route selection, and export policies are typically specified according to two type of relationships between ASes: customer-provider and peer-peer~\cite{h-ipas-99}. \footnote{\bf sibling-sibling?} In a customer-provider relationship there is a customer that wants have access to the Internet and there is a provider which sells this service to the customer. In a peer-peer relationship two peers agree to exchange traffic across their networks without any money transfer between them. In~\cite{gr-sirgc-00} it is shown that if ASes implement policies that follow these economic constraints, which are typical of commercial relationships between ASes in the Internet, a BGP network converges to a stable state {\bf (undefined)} even if there are arbitrary link failures.

We model the commercial relationships between ASes, suitably orienting a subset of the edges of $E$. The neighbors of each vertex of $V$ are partitioned into three classes: customers, peers, and providers. Edge $(u,v) \in E$ is directed from $u$ to $v$ if $u$ is a customer of $v$. Edge $(u,v)$ is undirected if $u$ and $v$ are peers. We say that a path of $G$ contained into an announcement is \emph{valley-free} if provider-customer and peer-peer edges are only followed by provider-customer edges.

The following conditions, known as \GR conditions, hold~\cite{gr-sirgc-00} in this setting.
\begin{inparablank} 
 \item {\bf GR1}: Graph $G$ has no directed cycles. Cycles would correspond to unclear customer-provider roles.
 \item {\bf GR2}: Each vertex $v \in V$ sends an announcement containing a path $P$ to a neighbor
$n$ only if path $(n\ v)P$ is valley-free. A valley is considered an anomaly because it corresponds to an AS providing transit to either its peers or its providers without revenues.
 \item {\bf GR3}: Each vertex prefers paths through customers over those provided by peers and paths through peers over those provided by providers. This captures the idea that an AS has an economic incentive to prefer forwarding traffic via customer (that pays it) over a peer (where no money is exchanged) over a provider (that it must pay). Observe that in the literature it is also studied a variation of this condition, where customer paths are preferred over peer and provider paths, without any distinction between them.
\end{inparablank}

Besides the \GR conditions other assumptions are usually done on the policies of the ASes~\cite{gshr-hsasirp-10}.
We say that \GRexpall conditions~\cite{gshr-hsasirp-10} hold if \GR conditions hold and every vertex
$v \in V$ behaves as follows:
\begin{inparablank}
 \item {\bf Shortest Paths}: Among paths received from neighbors belonging to the same class (customers, peers, and provider), $v$ chooses the shortest ones.
 \item {\bf Tie Break}: If there are multiple such paths, $v$ chooses the one whose next hop has the lowest AS number.
 \item {\bf NE policy}: $v$ always exports a path except when GR2 forbids him to do so.
\end{inparablank}

If the \GRexpall conditions are satisfied, then to fully specify the policies of the ASes a graph, with a suitably directed subset of edges is sufficient. Hence, in the following a \emph{BGP instance} is just a graph.

\footnote{put somewhere}
Sometimes in this paper we bound the degree of the manipulator AS and the longest path in the network. We call these constraints NDB (neighbor-degree bound) and PLB (path-length bound), respectively.

\textbf{security model}

\textbf{roles: $d$ prefix origin, $s$ destination, $m$ attacker, all ASes behave correctly but, possibly, $m$} 

We model the interdomain security problem as follows. A BGP instance with three specific vertices, $d$, $s$, and $m$ are given, where such vertices are the AS originating a prefix $\pi$, a selected destination for $\pi$, and an attacker, respectively.

All vertices behave correctly, i.e.\ they behave according to the BGP protocol, but $m$.

\textbf{targets of the attacks that $m$ can perform: hijacking, interception}

Vertex $m$ is interested in two types of attacks: \emph{hijacking} and \emph{interception}. In the hijacking attack $m$ is interested in attracting to itself all the traffic from $s$ to $d$.  In the interception attack $m$ is interested in being traversed from all the traffic from $s$ to $d$. 

\textbf{attack strategies}

In order to perform hijacking and interception $m$ can manipulate BGP announcements as follows. It can announce to its neighbors:
\begin{inparaenum}
\item That it is the originator of $\pi$ (that it does not own). 
\item A path to $d$ that it has never received.
\item A path received from a provider or from a peer to one of its providers or one of its peers, invaliding the GR2 condition.
\item Different paths to different neighbors (not mutually exclusive with the other ones).
\end{inparaenum}

\remove{
A manipulator can use different attack strategies accordingly to the different
security protocols used.

Under the Gao-Rexford conditions, in the BGP model a manipulator can:
\begin{enumerate}
 \item announce that it is the originator of a prefix that it does not own.
 \item announce a path that it has never received.
 \item invalidate the GR2 condition (e.g. it announces a provider path
learned from one of its provider to one of its provider).
 \item announce different paths to different neighbors.
\end{enumerate}

Under the Gao-Rexford conditions, in \SBGP a manipulator can:
\begin{enumerate}
 \item invalidate the GR2 condition (e.g. it announces a provider path
learned from one of its provider to one of its provider).
 \item announce different paths to different neighbors.
\end{enumerate}
}

\textbf{examples of attacks}

In Fig.~\ref{fig:example_of_attack} we show a network where $2$ and $6$ have a peer-to-peer relation, while all other relations are customer-provider. Vertex $d$ originates a prefix $\pi$. If all vertices behave correctly the traffic from $s$ to $d$ follows the $(s\ 6\ 2\ 1\ d)$ path. In fact vertex $2$ receives a unique announcement from $d$ containing path $(1\ d)$ and hence selects it to reach $d$. Even vertex $6$ receives a unique announcement from $d$. In fact, it cannot receive an announcement from vertex $5$ with path $(5\ 4\ 3\ m\ 2\ 1\ d)$ because this would imply that $m$ is invalidating the GR2 condition (i.e.\ it exports a path received from a provider to one of its providers).

Observe that, suitably cheating, $m$ can deviate the traffic from $s$ to $d$ attracting traffic from $s$. In fact, if $m$ announces to $2$ that it is the originator of $\pi$, then vertex $2$ prefers, for shortest-path, $(2\ m)$ over $(2\ 1\ d)$. Hence, the traffic from $s$ to $d$ is received my $m$ following path $(s\ 6\ 2\ m)$. This is an example of hijack.

Observe that if a vertex knows for some reasons that prefix $\pi$ is owned by $d$, then it can easily detect an anomaly in the bogus announcement sent by $m$. Anyway, $m$ can still attracts traffic from $s$ in a smarter way. In fact, it can announce its available path $(2\ 1\ d)$ to vertex $3$. The propagation of this path along vertices $3$, $4$ and $5$ is guaranteed by the fact that each of this vertices prefers paths that pass through their customer. Therefore, vertex $6$ has two available paths, namely, $(2\ 1\ d)$ and $(5\ 4\ 3\ m\ 2\ 1\ d)$. The second one is preferred over the first one because vertex $5$ is a customer of vertex $6$, while vertex $2$ is a peer of vertex $6$. Hence, the traffic from $s$ to $d$ is received by $m$ following path $(s\ 6\ 5\ 4\ 3\ m)$. This is an example of interception, since after passing through $m$ the traffic reaches $d$ following $(m\ 2\ 1\ d)$.

\begin{figure}
 \centering
 \includegraphics[width=0.5\columnwidth]
{figures/example_attacks}
 \caption{Attack Example 1}
 \label{fig:example_of_attack}
\end{figure}

A more sophisticated attack is in Fig.~\ref{fig:example_of_attack_2}. If all vertices behave correctly the traffic from $s$ to $d$ follows $(s\ 4\ d)$. In fact, $s$ receives two announcement from $d$ containing paths $(4\ d)$ and $(1\ 2\ 3\ d)$ and hence selects the first one to reach $d$ because it is shorter. Observe that, since $s$ has only providers as its neighbors and there exist no valley-free paths from $m$ to $s$ of length less than or equal to $2$, in order to attract traffic from $s$, path $(4\ d)$ has to be not available at vertex $s$. This happens only if vertex $4$ selects a path received from its peer neighbor $9$ (e.g $m$ announces to $7$ that it is the originator of $\pi$). Moreover, since path $(1\ 2\ 3\ d)$ is always available at vertex $s$, vertex $m$ can attract traffic from $s$ only if $s$ receives path $(5\ 6\ m)$. However, observe that if vertex $4$ selects path $(4\ 9\ 8\ 7\ m)$ than vertex also vertex $5$ selects path $(5\ 9\ 8\ 7\ m)$ since it is received from a peer. In this case, $s$ still selects path $(s\ 1\ 2\ 3\ d)$. In order to avoid such a collateral effect, if $m$  announces to vertex $7$ path $(m\ 5)$, then vertex $5$  discards path $(9\ 8\ 7\ m\ 5)$ because of loop-detection, and selects path $(6\ m)$. Hence, since path $(4\ d)$ is not available at vertex $s$, the traffic from $s$ to $d$ is received by $m$ following path $(s\ 5\ 6\ m)$. This is an example of hijack.

\begin{figure}
 \centering
 \includegraphics[width=0.4\columnwidth]
{figures/example_attacks_2}
 \caption{Attack Example 2}
 \label{fig:example_of_attack_2}
\end{figure}

\textbf{attack, then wait for stability}

There is a subtle issue to consider in attacks. BGP protocol is not guaranteed to converge, i.e.\ there are BGP configurations that can make the network oscillate forever~\cite{stabilita}. As observed before, if ASes follow the \GR conditions this cannot happen, and indeed this is true in our setting. However, $m$ can behave in such a way to ignore the \GR conditions. The creation of permanent oscillations in the network could, of course, be a target of a malicious agent. However, this is not the objective of $m$, that is interested either in hijacking or in interception. Also, provoking an oscillation could be counterproductive for $m$ since it could potentially show its malicious intents. Hence, $m$ must manipulate announcements so that after manipulation the network reaches a stable routing state.

We summarize the above discussion as follows. In our model the network starts from a stable routing state, where all vertices follow the \GRexpall conditions. Then, $m$ steadily sends manipulated announcements. After that, if $m$ succeeds in its purposes, the network reaches a new stable routing state where either hijacking or interception holds.

\textbf{security variations of bgp - s-bgp, etc.}

BGP does not include any security mechanism in order to check if a routing announcement is valid or not. Hence, the Internet community studied several variations of BGP to improve its security. The most famous one is perhaps {\bf \SBGP}. It uses both origin authentication and cryptographically-signed routing announcements in order to guarantee that an AS announces a path only if it has received this path in the past. This means that a vertex $v$ can announce a path $(v\ u)P$ only if vertex $u$ is a neighbor of $v$ and it announced $P$ to $v$. Moreover, a prefix cannot be originated by malicious ASes. Although, not yet deployed, \SBGP is an important reference point for studying the security of interdomain routing.

\textbf{cheating capabilities of $m$}

Depending on the adoption in the network of \SBGP or of ``plain'' BGP the attacker has more or less constrained \emph{cheating capabilities}. Also, it is interesting to distinguish among different cheating capabilities even if BGP is adopted.

In our model we study what happens if $m$ has the following cheating capabilities.
\begin{inparaenum}
\item With the \emph{\simpleHijack} cheating capabilities $m$ can do the simplest possible announcement manipulation. I.e.\ $m$ can only pretend to be the origin of the prefix owned by $d$, announcing this to a suitably chosen subset of its neighbors. As an example, this is the type of manipulation that has been performed in the Pakistan~\cite{u-phy-08} accident.
\item With the \emph{\plainBgp} cheating capabilities $m$ can announce to its neighbors any (even completely invented) path. It can also decide to announce different paths to different neighbors. Unfortunately, this is possible in BGP.
\item With the \emph{\assetBgp} cheating capabilities $m$ can exploit the following technicality of BGP~\cite{rfc4271}. In BGP an AS can replace one of the ASes of an announced path with an \emph{\ASSET}. An \ASSET is a set of ASes and counts one in the computation of the length of the path. It is used to say that an announcement passed through one of the ASes of the set without saying explicitly which one. If an AS receives a path that contains its identifier in an \ASSET, it discards the announcement for loop detection. Since the Internet community is discussing the possibility of deprecating the \ASSET~\cite{k-dbas-11}, we discuss this capability separately. However, accessing any Internet looking glass (see, e.g., \cite{ripe-11}) it is still possible to observe several announcement using \ASSET. Also, all router vendors support \ASSET in their BGP software.
\item With the \emph{\SBGP} cheating capabilities $m$ must enforce the \SBGP constraints. I.e.: $m$ can announce a path $(m\ u)P$ only if vertex $u$ is a neighbor of $m$ and it announced $P$ to $m$, $m$ cannot pretend to be the origin of the prefix owned by $d$, and \ASSET cannot be used. However, $m$ can still announce paths that are not the best to reach $d$ and can decide to announce different paths to different neighbors.
\end{inparaenum}

\textbf{1-round attacks; compare with k-round attacks?}

\subsection{\undirected}

\footnote{put somewhere}
In the undirected model there are no relationships between AS and each AS ranks
its preferred paths with the shortest path criteria. Moreover This model has
only theoretical interest in comparison with other presented models.

In the undirected model, in BGP a manipulator can announce any path.

In the undirected model, in \SBGP a manipulator can announce only path that it
has already received from one of its neighbors. 

Moreover, we study the problem of hijacking a prefix in both scenarios where a
manipulator can use the \ASSET attribute, or not. In \SBGP it is not possible
to use the \ASSET maliciously because it is not possible to forge arbitrary
bogus announcement. 

%%%%%%%%%%%%%%%%%%%%%%%%%%%%%%%%%%%%%%%%%%%%%%%%%%%%%%%%%%%%%%%%%%%%%%%%%%%%%%%%

}

% \input{related_work}

% ------------------------------------------------------------------
% \section{Trivial Attacks are Trivial to Find (if They Exist)}\label{sect:origin-spoofing}
\section{Checking if an Origin-Spoofing BGP Attack Exists}\label{sect:origin-spoofing}
% ------------------------------------------------------------------

% The most trivial attack is \simpleHijack, where a manipulator pretends to be the origin of the prefix owned by the victim AS. We show that, although in the general model it is hard to decide if such an hijack attack exists (Theorem~\ref{theo:grexpall_simple-hijack_bgp}), in more realistic models, i.e.\ where path lengths are bounded, this is an easy task (Theorem~\ref{theo:grexpall_plb_simple_hijack_bgp}). 

In this section, we show that, in general, it is hard to find an attack strategy if $m$ has an \simpleHijack cheating capability (Theorem~\ref{theo:grexpall_simple-hijack_bgp}), while the problem turns to be easier in a realistic setting (Theorem~\ref{theo:grexpall_plb_origin_bgp}). 

%First we show that finding a hijacking strategy in the theoretical model is computationally unfeasible.

\remove{

The exponential complexity of this approach is due to two
factors: (i) even in the case when a single path is announced, $m$ has
to decide to what neighbors the announcement is sent, which results in
an exponential number of possibilities w.r.t.\ the number of neighbors;
(ii) in the case $m$ has \plainBgp capabilities, a single announcement
can be forged in an exponential number of ways w.r.t.\ the number of
vertices of the graph. Hence, the following result in the case the Internet graph has no bound constraints may be not surprising at all.
}

A hijacking can be obviously found in exponential time by a simple brute force approach which simulates every possible attack strategy and verifies its effectiveness. The following result in the case the Internet graph has no bound constraints may be somehow expected.

\newcommand{\greBgpSimpleHijack}{
If the manipulator has \simpleHijack cheating capabilities, then problem \hijack is \NPhard.}

\begin{theorem}\label{theo:grexpall_simple-hijack_bgp}
\greBgpSimpleHijack
\end{theorem}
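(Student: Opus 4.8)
The plan is to prove the theorem by a polynomial-time reduction from \threesat, following the construction suggested in Fig.~\ref{fig:bgp_origin_spoofing_bgp_grexpall}. Given a formula $\phi$ with variables $x_1,\dots,x_n$ and clauses $C_1,\dots,C_k$, I would build a BGP instance whose only free choice is the one available to a \simpleHijack manipulator: the subset of neighbors to which $m$ announces itself as the origin of $\pi$. The encoding is that $m$ is joined to a literal vertex $x_i$ (resp.\ $\bar x_i$) by a chain of fresh vertices so that the corresponding spoofed route has length $2n+2$, and announcing the spoof along that chain is read as setting the literal to \emph{true}. Thus a spoofing strategy corresponds to a set $L$ of selected literals, and the goal is to arrange the rest of the graph so that the hijack succeeds if and only if $L$ is consistent (never both $x_i$ and $\bar x_i$) and hits every clause, i.e.\ $L$ encodes a satisfying assignment.

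I would design the graph so that $s$ reaches $d$ only through a family of ``clause paths'', one associated with each $C_j$ and routed so as to traverse the literal vertices occurring in $C_j$, while the route by which $m$ hopes to attract $s$ (the ``fallback'' to $m$) is made available but strictly less preferred than any surviving clause path. The lengths (calibrated using $2n+2$) and commercial relationships are chosen so that two local effects hold. First, \emph{clause satisfaction as disruption}: a spoof reaching a literal $\ell \in C_j$ is preferred, by the shortest-path-within-class and GR3 rules, over the genuine sub-path to $d$ at the clause vertex of $C_j$, so the clause path from $s$ to $d$ is broken exactly when some literal of $C_j$ is selected. Second, \emph{consistency as a collateral effect}: mirroring Example~3, the spoof toward $\bar x_i$ (when $x_i$ is also spoofed) would divert or loop-kill the fallback path to $m$, so that selecting both literals of a variable destroys the route $m$ needs. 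With these two effects, $s$ abandons all of its $d$-paths and settles on the fallback to $m$ precisely when $L$ is a consistent hitting set of the clauses.

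Correctness then splits into two implications. If $\phi$ is satisfiable, $m$ spoofs exactly the true literals; every clause path is disrupted and, by consistency, the fallback survives, so the unique stable state (which exists by Lemma~\ref{lemma:grexpall_plain_stable_state}, since $m$ announces a fixed set of paths) routes $s$ to $m$. Conversely, if some spoofing set $L$ achieves the hijack, then no clause path can survive at $s$ (otherwise $s$ would retain a genuine route to $d$ that it prefers over the fallback), which forces $L$ to contain a literal of every clause; moreover $L$ must be consistent, since an inconsistent choice re-activates the collateral disruption of the fallback and makes the hijack fail. Hence $\phi$ is satisfiable iff the hijack is possible, and the reduction is clearly polynomial, since the chains add only $O(n^2)$ vertices.

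The main obstacle is the local BGP bookkeeping: I must choose lengths and customer--provider/peer orientations so that all the intended preferences (valley-freeness of every used path, the GR3 class ordering, shortest-path ties broken as specified, and loop detection for the consistency gadget) hold \emph{simultaneously} across all variable and clause gadgets, and so that no unintended fake path created by origin spoofing leaks from one gadget into another. Verifying that the network converges to exactly the intended stable state, rather than merely to some stable state guaranteed by Lemma~\ref{lemma:grexpall_plain_stable_state}, is the delicate part, and is where the precise value $2n+2$ and the gadget geometry of Fig.~\ref{fig:bgp_origin_spoofing_bgp_grexpall} do the real work.
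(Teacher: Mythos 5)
Your plan follows the paper's reduction in its broad architecture --- the same \threesat encoding, the same literal chains of length $2n+2$, and the same two local effects (clause paths disrupted by better class at the clause vertices, and collateral disruption of the fallback when both literals of a variable are spoofed) --- but it omits the one ingredient that makes the ``only if'' direction sound: the paper's \longLengthStruct, a genuine directed $s$--$d$ path of length $2n+3$ whose length is wedged strictly between the clean fallback through the \intLengthStruct and every ``leaked'' spoofed route. Your consistency argument asserts that spoofing both $x_i$ and $\bar x_i$ ``destroys the route $m$ needs'' and hence ``makes the hijack fail'', but destroying the fallback does not by itself defeat the hijack: when $t_i$ and $\bar t_i$ switch to the peer-learned spoofed paths, GR2 still allows (and the NE policy forces) them to export those paths to their customers on the chain, so $s$ continues to receive routes to $m$, e.g.\ one of length about $2n+4$ through $t_n$ and $x_n$. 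If, as in your sketch, the only genuine alternatives at $s$ are the clause paths --- all of which are disrupted by a hitting set of literals --- then $s$ would select one of these leaked routes, and an \emph{inconsistent} strategy (say, announcing $\pi$ to every literal chain) would still hijack, collapsing the converse direction of your reduction. Note also that your stated hope of arranging the gadgets ``so that no unintended fake path created by origin spoofing leaks from one gadget into another'' cannot be realized here, precisely because of the forced export just described; the paper does not prevent the leakage, it neutralizes it by giving $s$ the length-$(2n+3)$ genuine route, which beats every leaked route to $m$ but loses to the clean fallback. This calibration is a structural component of the construction, not deferrable bookkeeping.

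Two smaller corrections. First, the collateral effect works purely by GR3: $t_i$ prefers the peer-learned spoofed path over the provider-learned chain path; loop detection plays no role in this reduction (it appears only in the commented variant of Example~3), so ``loop-kill'' is a red herring. Second, in the paper's gadget the literal vertices do not lie on the clause paths; they hang off them as customers of the vertices $c_{j,l}$, which is what makes the disruption a better-class one, and $m$ must additionally announce $\pi$ to $q_1$, the head of the fallback --- a requirement your correspondence between spoofing sets and literal sets needs to account for when extracting an assignment from a successful attack.
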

  % \vspace{-0.2cm}

\begin{proof}
We prove that \hijack is \NPhard by a reduction from the \threesat problem. Let $F$ be a logical formula in conjunctive normal form with variables $X_1 \dots X_n$ and clauses $C_1 \dots C_h$ where each clause $C_i$ contains three literals. We construct a \GRexpall compliant BGP instance $G$ as follows. 

Graph $G$ consists of $4$ structures: the \intLengthStruct, the \shoLengthStruct, the \longLengthStruct, and the \disruptiveLengthStruct. See Fig.~\ref{fig:bgp_origin_spoofing_bgp_grexpall}.

The \intLengthStruct is the only portion of $G$ containing  valley-free paths joining $s$ and $m$ that are shorter than the one contained in the \longLengthStruct. It is composed by edge $(m,q_1)$ and two directed paths from $s$ to $q_1$ of length $2n$: the first path is composed by edges $(s,t_n)$, $(t_n,q_n)$, $(q_n,t_{n-1})$, $(t_{n-1},q_{n-1})$, $(q_{n-1},t_{n-2})$, \dots, $(t_2,q_2)$, $(q_2,t_1)$, and $(t_1,q_1)$ while the second path is composed by edges $(s,\bar t_n)$, $(\bar t_n,q_n)$, $(q_n,\bar t_{n-1})$, $(\bar t_{n-1},q_{n-1})$, $(q_{n-1},\bar t_{n-2})$, \dots, $(\bar t_2,q_2)$, $(q_2,\bar t_1)$, and $(\bar t_1,q_1)$. Obviously, these two paths can be used to construct an exponential number of other paths. We say that a path traverses the \intLengthStruct if it passes through vertices $s$ and $q_1$.

The \shoLengthStruct consists of $h$ paths joining $s$ and $d$. Each path has length $4$ and has edges $(s,c_{i,1})$, $(c_{i,1},c_{i,2})$, $(c_{i,2},c_{i,3})$, and $(c_{i,3},d)$ ($1\leq i\leq h$).
The \longLengthStruct is a directed path of length $2n+3$ with edges $(s,w_1)$, $(w_1,w_2)$, \dots, $(w_{2n+1},w_{2n+2})$, and~$(w_{2n+2},d)$. 
The \disruptiveLengthStruct is composed by $2n$ paths plus $3h$ edges. The $2n$ paths are defined as follows. For $1\leq i\leq n$ we define two paths. The first path contains a directed subpath of length $2n+2$ from $m$ to $x_i$ (dotted lines in Fig.~\ref{fig:bgp_origin_spoofing_bgp_grexpall}), plus the undirected edge $(x_i,t_i)$. The second path contains a directed subpath of length $2n+2$ from $m$ to $ \bar x_i$ (dotted lines in Fig.~\ref{fig:bgp_origin_spoofing_bgp_grexpall}) plus the undirected edge $(\bar x_i,\bar t_i)$. The $3h$ edges are added to $G$ as follows. For each clause $C_i$ and each literal $L_{i,j}$ of $C_i$, which is associated to a variable $x_k$, if $L_{i,j}$ is positive, then we add $(x_k,c_{i,j})$, otherwise we add $(\bar x_k,c_{i,j})$.
We say that a path traverses the \disruptiveLengthStruct if it traverses it from $m$ to $s$.

%We model tie break policy of $s$ for equal length paths, as follows: paths that traverse entirely the \intLengthStruct are less preferred than paths that traverse entirely the \longLengthStruct.

Vertices $s$, $d$, and $m$ have source, destination, and manipulator roles, respectively. 

Intuitively, the proof works as follows. The paths that allow traffic to go from $s$ to $m$ are only those passing through the \disruptiveLengthStruct and the \intLengthStruct. Also, the paths through the \intLengthStruct are shorter than the one through the \longLengthStruct, which is shorter than those through the \disruptiveLengthStruct.

If $m$ does not behave maliciously, $s$ receives only paths that traverse the \shoLengthStruct and the \longLengthStruct. In this case $s$ selects one of the paths in the \shoLengthStruct according to its tie break policy. 

Observe that if $m$ wants to attract traffic from $s$, then:
\begin{inparaenum}[(i)]
 \item a path from $m$ traversing entirely the \intLengthStruct has to reach $s$ and
 \item all paths contained in the \shoLengthStruct have to be disrupted by a path announced by~$m$.
\end{inparaenum}

Observe that only valley-free paths contained in the \intLengthStruct, which have length at least $2n+2$, can be used to attract traffic from $s$.  If (i) does not hold, then $s$ selects the path contained in the \longLengthStruct or a path contained in the \shoLengthStruct.
If (ii) does not hold, then $s$ selects a path contained in the \shoLengthStruct.

Our construction is such that the \threesat formula is satisfiable iff $m$ can attract the traffic from $s$ to $d$. To understand the interplay between our construction and the \threesat problem, consider (see Fig.~\ref{fig:bgp_origin_spoofing_bgp_grexpall}) the behavior of $m$ with respect to neighbors $x_2$ and $\bar x_2$. If $m$ wants to disrupt path $(s\ c_{1,1}\ c_{1,2}\ c_{1,3}\ d)$ (which corresponds to making clause $C_1$ true) it might announce the prefix to $x_2$. This would have the effect of disrupting $(s\ c_{1,1}\ c_{1,2}\ c_{1,3}\ d)$ by better class.  
Observe that at the same time this would disrupt all the paths through $t_2$. If $m$ is able to disrupt all the paths in the \shoLengthStruct, then $s$ has to select a path in the \intLengthStruct. However, $m$ has to be careful for two reasons. First, $m$ has to announce the prefix to $q_1$ (otherwise no path can traverse the \intLengthStruct). Second, $m$ cannot announce the prefix both to $x_2$ and to $\bar x_2$ (variable $X_2$ cannot be true and false at the same time). In this case, all the paths through $t_2$ and $\bar t_2$ are disrupted. Also, consider that the paths that reach $s$ through $t_2$ and $x_2$ ($\bar t_2$ and $\bar x_2$) and that remain available are longer than the one in the \longLengthStruct.

Now we show that if $F$ is satisfiable, then $m$ can attract traffic from $s$.
Let $M$ be a truth assignment to variables $X_1$, \dots, $X_n$ satisfying
formula $F$. Let $m$ announce to its neighbors paths as follows: if $X_i$ ($i=1,\dots,n$) is true then $m$ announces the prefix to $x_i$ and does not announce anything to $\bar x_i$; otherwise $m$ does the opposite. Also, the prefix in announced to $q_1$ in all cases.

We have that:
\begin{inparaenum}
\item all paths (one for each clause) in the \shoLengthStruct are disrupted by better class from the paths in the \disruptiveLengthStruct;
\item one path belonging to the \intLengthStruct is available at $s$;
\item the path in the \longLengthStruct, available at $s$, is longer than the path in the \intLengthStruct.
\end{inparaenum}
Hence, $m$ can attract traffic from $s$.

Now we prove that if manipulator $m$ can attract traffic from $s$, then $F$ is satisfiable.

We already know from the above discussion that $m$ can attract traffic from $s$ only using paths that traverse the \intLengthStruct entirely. We also know that these paths are longer than paths contained in the \shoLengthStruct and therefore, every path contained in the \shoLengthStruct has to be disrupted.
We have that paths contained in the \shoLengthStruct can be disrupted only by using paths contained in the \disruptiveLengthStruct. Let $V^*$ be the set of neighbors of $m$ different from $q_1$ that receive an announcement of the prefix from $m$. Observe that $s$, to attract traffic from $m$, has to announce the prefix to $q_1$. From the above discussion we have that for $i=1,\dots,n$ it is not possible both for $x_i$ and for $\bar x_i$ to receive the announcement. Also, since all paths in the \shoLengthStruct have been disrupted, for $j=1,\dots,h$ at least one of the $c_{j,k}$ ($k=1,2,3$) receives an announcement of the prefix from $m$. Hence, we define an assignment $M$, which satisfies formula $F$, as follows: for each $i=1,\dots,n$, if $x_i\in V^*$, then $M(X_i)=\top$, otherwise $M(X_i)=\bot$. \qed
\end{proof}

Surprisingly, in a more realistic scenario, where the length of  valley-free paths is bounded by a constant $k$, we have that in the \simpleHijack setting an attack strategy can be found in polynomial time ($n^{O(k)}$, where $n$ is the number of vertices of $G$). Let $N$ be the set of neighbors of $m$. Indeed, the difficulty of the \hijack problem in the \simpleHijack setting depends on the fact that $m$ has to decide to which of the vertices in $N$ it announces the attacked prefix $\pi$, which leads to an exponential number of possibilities. 
However, when the longest valley-free path in the graph is bounded by a constant $k$, it is possible to design a polynomial-time algorithm based on the following intuition, that will be formalized below. Suppose $m$ is announcing  $\pi$ to a subset $A \subseteq N$ of its neighbors and path $p=(z\ \dots\ n\ m)$ is available at an arbitrary vertex $z$ of the graph. Let $n_1,n_2$ be two vertices of $N \setminus A$. If $p$ is disrupted (is not disrupted) by better class both when $\pi$ is announced either to $n_1$ or to $n_2$, then $p$ is disrupted (is not disrupted) by better class when $\pi$  is announced to both $n_1$ and $n_2$. This implies that once $m$ has a candidate path $p^*$ for attracting traffic from $s$, it can check independently to which of its neighbors it can announce $\pi$ without disrupting $p^*$ by better class, which guarantees that a path from $m$ to $z$ longer than $p$ cannot be selected at $z$.

In order to prove Theorem~\ref{theo:grexpall_plb_origin_bgp}, we introduce the following lemmata that relate attacks to the structure of the Internet.

% Let $a^*$ be a successful attack of $m$ and let $a^*_s$ be the path selected by $s$.
% We say that $a^*$ is the \emph{most preferred} successful attack if for each successful attack $a$ of $m$, where $a_s$ is the path selected by $s$, we have $a^*_s <_\lambda^s a_s $\footnote{do we need this definition anymore?}.

\newcommand{\lemmaDisruptedBySameClass}{
 Consider a valley-free path $p=(v_n\ \dots\ v_1)$ and consider an attack of $m$ such that $v_1$ announces a path $p_{v_1}$  to $v_{2}$ to reach prefix $\pi$ and $p$ is possibly disrupted only by same class. Vertex $v_n$ selects a path $p_n \le_{\lambda}^{v_n}pp_{v_1}$.}

\begin{lemma}\label{lemma:grexpall_origin_same_class_disruption}
\lemmaDisruptedBySameClass
% of class $f^{v_n}(p)$ such that $p_n
\end{lemma}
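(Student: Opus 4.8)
The plan is to prove the statement by an induction that propagates information along $p$ from $v_1$ towards $v_n$. For $1\le i\le n$ write $q_i=(v_i\ \dots\ v_1)$ for the subpath of $p$ ending at $v_1$, so that $q_1=(v_1)$, $q_n=p$, and $q_i p_{v_1}$ is the path that follows $p$ from $v_i$ down to $v_1$ and then $p_{v_1}$ to reach $\pi$. The invariant I would maintain is that, in the stable state guaranteed by Lemma~\ref{lemma:grexpall_plain_stable_state}, vertex $v_i$ selects a path $s_i$ with $s_i\le_\lambda^{v_i} q_i p_{v_1}$ and, crucially, $f^{v_i}(s_i)=f^{v_i}(q_i p_{v_1})$ (the selected path has exactly the same class as the continuation of $p$). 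Setting $i=n$ and $p_n:=s_n$ then yields the claim, since $q_n p_{v_1}=p p_{v_1}$. The point of carrying the class equality, rather than preference alone, is that it is precisely what lets the chain be extended by one edge without degradation.

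The base case $i=1$ is immediate: since $v_1$ announces $p_{v_1}$ to $v_2$ we may take $s_1=p_{v_1}=q_1 p_{v_1}$, and this same hypothesis makes $a_2:=(v_2\ v_1)p_{v_1}$ available at $v_2$. For the step ($i\ge 2$, with the first export already supplied by the hypothesis and every later $v_{i-1}$ a GR-EA vertex), assume the invariant for $i-1$. By the NE policy $v_{i-1}$ exports $s_{i-1}$ to $v_i$ unless GR2 forbids it; but $s_{i-1}$ is valley-free and, by the inductive class equality, its first edge has the same class as the first edge of $q_{i-1}p_{v_1}$, so prefixing the edge $(v_i\ v_{i-1})$ of the valley-free path $p$ creates no valley at $v_{i-1}$. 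Hence $a_i:=(v_i\ v_{i-1})s_{i-1}$ is valley-free and available at $v_i$. Both $a_i$ and $q_i p_{v_1}=(v_i\ v_{i-1})q_{i-1}p_{v_1}$ share the first edge $(v_i\ v_{i-1})$, hence the same class; and since $s_{i-1}$ and $q_{i-1}p_{v_1}$ are of equal class at $v_{i-1}$ with $s_{i-1}\le_\lambda^{v_{i-1}} q_{i-1}p_{v_1}$, we get $|s_{i-1}|\le|q_{i-1}p_{v_1}|$, where in the case of equal length the policy-consistent tie break of~\cite{es-wairg-11} (both paths reach $v_i$ from the same neighbor $v_{i-1}$, which prefers $s_{i-1}$) gives $a_i\le_\lambda^{v_i} q_i p_{v_1}$.

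It remains to control $v_i$'s actual choice $s_i$. Because $a_i$ is available, $s_i\le_\lambda^{v_i} a_i\le_\lambda^{v_i} q_i p_{v_1}$, which establishes the preference part of the invariant, even if $v_i$ happens to pick a strictly more preferred path arriving from outside $p$. For the class equality, note $f^{v_i}(s_i)\ge f^{v_i}(a_i)=f^{v_i}(q_i p_{v_1})$; were this strict, $v_i$ would select a path of strictly better class than the continuation of $p$, i.e.\ $p$ would be disrupted at $v_i$ by a path of a better class, contradicting the hypothesis that $p$ is disrupted only by same class. Thus $f^{v_i}(s_i)=f^{v_i}(q_i p_{v_1})$, closing the induction.

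I expect the main obstacle to be the inequality $a_i\le_\lambda^{v_i} q_i p_{v_1}$, because it is exactly here that the two refinements of the model must be combined: the same-class invariant is what reduces the comparison to path length (so that the shorter selected path $s_{i-1}$ stays preferred after prefixing the common edge), while the policy-consistent tie break is what forbids an inversion of preference when the lengths coincide. The same class bookkeeping simultaneously guarantees exportability by ensuring no valley is created at $v_{i-1}$, so that $a_i$ is genuinely available; overlooking either ingredient would break the propagation.
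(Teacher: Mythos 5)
Your proposal follows the same inductive skeleton as the paper's proof (propagate a selected path along $p$, keep it preferred over the continuation of $p$, and use the no-better-class-disruption hypothesis to pin the class down), but it has one genuine gap in the inductive step: you assert that $a_i=(v_i\ v_{i-1})s_{i-1}$ is \emph{available} at $v_i$, justifying only its valley-freeness, i.e.\ GR2 exportability. You never consider the case in which $s_{i-1}$ itself contains $v_i$. In that case BGP's loop detection makes $v_i$ discard the announcement, $a_i$ is not available at $v_i$, and the chain $s_i\le_\lambda^{v_i}a_i\le_\lambda^{v_i}q_ip_{v_1}$ collapses. Nothing in your invariant excludes this: a path selected at $v_{i-1}$ can have the same class as $q_{i-1}p_{v_1}$, be no longer, and still pass through $v_i$, so the case is live and the step as written fails.

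The paper devotes the second half of its proof precisely to this case and closes it by contradiction with the hypothesis on $p$. If $v_i$ lies on $p_{i-1}$ (the paper's name for your $s_{i-1}$), let $p'$ be the subpath of $p_{i-1}$ from $v_i$; since $p_{i-1}$ is the path selected at $v_{i-1}$, $p'$ is the path selected at $v_i$. Now the walk $(v_i\ v_{i-1})p_{i-1}$ is valley-free and revisits $v_i$; by GR1-acyclicity a valley-free walk can return to its start only if it leaves towards a provider and, having turned through a peer or provider-to-customer edge before re-entering $v_i$, its suffix must depart on a provider-to-customer edge. Hence $f^{v_i}(p')>f^{v_i}((v_i\ v_{i-1})p_{i-1})=f^{v_i}((v_i\ \dots\ v_1))$, so $p$ would be disrupted at $v_i$ by a path of better class, contradicting the hypothesis that $p$ is disrupted only by same class. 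Adding this case analysis completes your argument; the remainder of your proof (class bookkeeping for exportability, reduction of preference to length within a class, and the explicit appeal to the policy-consistent tie break, which the paper uses only implicitly) is sound and matches the paper's.
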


\begin{proof}
We prove inductively that each vertex $v_i$ in $p$ selects a path $p_i\le_{\lambda}^{v_i}(v_i\ \dots\ v_1)$ such that $|p_i|\le|(v_i\ \dots\ v_1)|$. In the base case ($n=1$), the statement holds since $p_1\le_{\lambda}^{v_1} p_1$ and $|p_1|\le|p_1|$. In the inductive step ($n>1$), by induction hypothesis and NE policy, vertex $v_i$ receives a path $p_{i-1}$ from vertex $v_{i-1}$ such that $p_{i-1}\le_{\lambda}^{v_{i-1}}(v_{i-1}\ \dots\ v_1)$ and $|p_{i-1}|\le|(v_{i-1}\ \dots\ v_1)|$. Two cases are possible: $p_{i-1}$ contains $v_i$, or not. In the second case, $v_i$ selects a path $p_{i}\le_{\lambda}^{v_i} (v_i\ v_{i-1})p_{i-1}$ and since path $(v_i\ \dots\ v_1)$ is disrupted only by same class, we have also $|p_i|\le|(v_i\ v_{i-1})p_{i-1}|\le|(v_i\ \dots\ v_1)|$. In the first case, let $p'$ be the subpath of $p_{i-1}$ from $v_i$. Observe that since $(v_i\ v_{i-1})p_{i-1}$ is a valley-free path and vertex $v_i$ is repeated in that path, we have that $f^{v_i}(p')>f^{v_i}((v_i\ v_{i-1})p_{i-1})=f^{v_i}(v_i\ \dots\ v_1)$, which is not possible  since $(v_i\ \dots\ v_1)$ cannot be disrupted by better class. 
% Since $p$ is disrupted only by same class and $s$ selects a path $p_n \le_{\lambda}^{v_n} p$, we have that $f^{v_n}(p_n)=f^{v_n}(p)$.
\qed
\end{proof}

\newcommand{\lemmaDisruptedByBetterClass}{
Consider a successful attack for $m$ and let $p_{sm}$ be the path selected at $s$. 
Let $p_{sd}$ be a valley-free path from $s$ to $d$ such that it does not traverse $m$ and such that $p_{sd} <_\lambda^s p_{sm}$. Path $p_{sd}$ is disrupted by a path of better class.}

\begin{lemma}\label{lemma:grexpall_plain_better_class_disruption}
\lemmaDisruptedByBetterClass
\end{lemma}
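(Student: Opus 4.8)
The plan is to argue by contradiction, using Lemma~\ref{lemma:grexpall_origin_same_class_disruption} as the main engine. Suppose that $p_{sd}$ is \emph{not} disrupted by any path of better class. Then every disruption it suffers (if any) is by a path of the same class, so $p_{sd}$ falls exactly under the hypothesis ``possibly disrupted only by same class'' of Lemma~\ref{lemma:grexpall_origin_same_class_disruption}. The goal is to show that this assumption forces $s$ to select, in the stable state, a path at least as preferred as $p_{sd}$, contradicting both the success of the attack and the hypothesis $p_{sd}<_\lambda^s p_{sm}$.

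Next I would set up the application of that lemma. Write $p_{sd}=(v_n\ \dots\ v_1)$ with $v_n=s$ and $v_1=d$. Since the attack leaves every vertex except $m$ behaving correctly, and by hypothesis $p_{sd}$ does not traverse $m$, all vertices of $p_{sd}$ are honest; in particular $d$ originates $\pi$ and, by the NE policy, announces the trivial path $p_{v_1}=(d)$ to its neighbour $v_2$ along $p_{sd}$. Because $p_{sd}$ is valley-free (given), Lemma~\ref{lemma:grexpall_origin_same_class_disruption} applies with $p=p_{sd}$, and its conclusion concerns the concatenation $p\,p_{v_1}=p_{sd}\,(d)$, which by definition of concatenation equals $p_{sd}$ itself. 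Hence the lemma guarantees that $s=v_n$ selects a path $p_n\le_\lambda^s p_{sd}$.

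The contradiction now closes by combining stability with the success of the attack. By Lemma~\ref{lemma:grexpall_plain_stable_state} the network converges, and in the resulting stable state $s$ selects a single path; since the attack is successful, that path is $p_{sm}$, so $p_n=p_{sm}$. Chaining the inequalities $p_{sm}=p_n\le_\lambda^s p_{sd}<_\lambda^s p_{sm}$ and using transitivity of the ranking at $s$ yields $p_{sm}<_\lambda^s p_{sm}$, which is impossible. Therefore the assumption must fail and $p_{sd}$ is disrupted by a path of better class.

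I expect the only delicate point to be the correct instantiation of Lemma~\ref{lemma:grexpall_origin_same_class_disruption}: identifying $d$ as the announcing endpoint $v_1$ with the trivial announcement $p_{v_1}=(d)$, verifying that $p_{sd}$ being valley-free and $m$-free makes the lemma's hypotheses legitimate, and observing that concatenation with the single-vertex path leaves $p_{sd}$ unchanged so that the conclusion reads $p_n\le_\lambda^s p_{sd}$. Once this mapping is in place, the preference chain is immediate and the remaining steps are routine.
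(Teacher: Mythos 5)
Your proof is correct and follows essentially the same route as the paper: contradiction, then an application of Lemma~\ref{lemma:grexpall_origin_same_class_disruption} instantiated with $v_1=d$ and the trivial announcement $(d)$, yielding $p_{sm}\le_\lambda^s p_{sd}<_\lambda^s p_{sm}$. The only cosmetic difference is that the paper splits off the ``not disrupted at all'' case separately, whereas you fold it into the lemma's ``possibly disrupted only by same class'' hypothesis, which is legitimate.
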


\begin{proof}
Suppose by contradiction that there exists  a valley-free path $p_{sd}$ from $s$ to $d$ such that $p_{sd} <_\lambda^s p_{sm}$ and $p_{sd}$ is not disrupted by a path of better class.
If $p_{sd}$ is not disrupted, then it is available at vertex $s$. It implies that $s$ selects $p_{sd}$ as its best path, which leads to a contradiction. Otherwise, suppose $p_{sd}$ is disrupted only by same class. By Lemma~\ref{lemma:grexpall_origin_same_class_disruption} we have a contradiction since $s$ selects a path $p\le_{\lambda}^{s}(p_{sd})<_{\lambda}^{s}(p_{sm})$ different from $p_{sm}$.
\qed
\end{proof}

\begin{lemma}\label{lemma:grexpall_plain_better_class_propagation}
 Let  $p=(v_n\ \dots\ v_1)$ be a valley-free path. Consider an attack where $v_1$ announces a path $p_1$ to $v_{2}$. Vertex $v_n$ selects a path of class  at least $f^{v_n}(p)$.
\end{lemma}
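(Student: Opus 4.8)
The plan is to prove, by induction along the path from $v_1$ toward $v_n$, that every vertex $v_i$ selects a path $p_i$ with $f^{v_i}(p_i) \ge f^{v_i}((v_i\ \dots\ v_1))$; the statement for $v_n$ is then the case $i=n$. The argument should parallel the proof of Lemma~\ref{lemma:grexpall_origin_same_class_disruption} closely, except that it tracks the \emph{class} of the selected path rather than its length, and it does not assume ``disrupted only by same class.''

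First I would record the structural fact that drives the induction. Writing $c_i = f^{v_i}((v_i\ v_{i-1}))$ for the class of the $i$-th edge as seen from $v_i$, valley-freeness (a provider-customer or peer-peer edge is followed only by provider-customer edges, read in the traffic direction $v_n \to v_1$) forces the sequence of classes to consist of a block of class-$1$ edges, then at most one class-$2$ edge, then a block of class-$3$ edges. Hence $c_2 \ge c_3 \ge \dots \ge c_n$, and whenever $c_i = 2$ the next edge satisfies $c_{i-1}=3$. This monotonicity is precisely what I will need to guarantee that the announcement keeps propagating.

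The base case is the hypothesis itself: since $v_1$ announces $p_1$ to $v_2$, vertex $v_2$ has the path $(v_2\ v_1)p_1$ available, of class $c_2$ at $v_2$ (unless a loop occurs, handled as below), and so by GR3 it selects a path of class at least $c_2$. For the inductive step I would combine the induction hypothesis $f^{v_{i-1}}(p_{i-1}) \ge c_{i-1}$ with the NE policy to show that $v_{i-1}$ exports $p_{i-1}$ to $v_i$: GR2 blocks the export only if the new edge $(v_i,v_{i-1})$ is peer or provider-customer ($c_i \in \{2,3\}$) while $p_{i-1}$ is not a customer route, but the monotonicity above forces $c_{i-1}=3$, hence $f^{v_{i-1}}(p_{i-1})=3$, exactly when $c_i \in \{2,3\}$, so $(v_i\ v_{i-1})p_{i-1}$ is valley-free and the export goes through. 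Then, as in Lemma~\ref{lemma:grexpall_origin_same_class_disruption}, I would split on whether $p_{i-1}$ contains $v_i$. If not, $v_i$ has an available path of class $c_i$ and by GR3 selects one of class at least $c_i$. If $v_i \in p_{i-1}$, then $v_i$ already selects the suffix $p'$ of $p_{i-1}$ starting at $v_i$, and since $(v_i\ v_{i-1})p_{i-1}$ is valley-free with $v_i$ repeated we obtain $f^{v_i}(p') > f^{v_i}((v_i\ v_{i-1})p_{i-1}) = c_i$, so again $v_i$ selects a path of class at least $c_i$.

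The main obstacle I anticipate is making the export step airtight: I must verify that the valley-free monotonicity of the $c_i$ rules out \emph{every} GR2 violation along the chain, including the boundary index $i=2$ where $c_{i-1}$ is undefined and the seed announcement is simply given. The loop case is then painless: unlike in Lemma~\ref{lemma:grexpall_origin_same_class_disruption}, where the repeated-vertex inequality produced a contradiction, here it merely yields a strictly better class ($> c_i$), which only strengthens the inductive claim. No length bookkeeping is required, since the conclusion concerns class alone.
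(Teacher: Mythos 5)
Your proposal is correct and takes essentially the same route as the paper's proof: an induction along $p$ from $v_1$ to $v_n$ on the claim $f^{v_i}(p_i)\ge f^{v_i}((v_i\ \dots\ v_1))$, with the same case split on whether $p_{i-1}$ contains $v_i$ and the same repeated-vertex class inequality $f^{v_i}(p')>f^{v_i}((v_i\ v_{i-1})p_{i-1})$ in the loop case. The only difference is that you spell out why GR2 never blocks the export (via the monotonicity of edge classes along a valley-free path), a step the paper compresses into ``by induction hypothesis and NE policy''; this is a welcome clarification, not a different argument.
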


\begin{proof}
We prove that each vertex $v_i$ in $p$ selects a path $p_i$ such that $f^{v_i}(p_i)\ge f^{v_i}(v_i\ \dots\ v_1)$. In the base case ($n=1$), the statement holds since $f^{v_1}(p_1)\ge f^{v_1}(p_1)$. In the inductive step ($n>1$), by induction hypothesis and NE policy, vertex $v_i$ receives a path $p_{i-1}$ from vertex $v_{i-1}$ such that $f^{v_{i-1}}(p_{i-1})\ge f^{v_{i-1}}(v_{i-1}\ \dots\ v_1)$. Two cases are possible: $p_{i-1}$ contains $v_i$ or not. In the second case, $v_i$ selects a path $p_{i}\le_{\lambda}^{v_i} (v_i\ v_{i-1})p_{i-1}$ which implies that $f^{v_i}(p_i)\ge f^{v_i}(v_i\ \dots\ v_1)$. In the first case, let $p'$ be the subpath of $p_{i-1}$ from $v_i$. Observe that since $(v_i\ v_{i-1})p_{i-1}$ is a valley-free path and vertex $v_i$ is repeated in that path, we have that, $f^{v_i}(p')>f^{v_i}((v_i\ v_{i-1})p_{i-1})=f^{v_i}(v_i\ \dots\ v_1)$, and the statement holds also in this case.\qed
\end{proof}

% % % % % % % % % % ORIGIN SPOOFING ALGORITHM % % % % % % % %

\newcommand{\greBgpOriginPlb}{
If the manipulator has \simpleHijack cheating capabilities and the length of
the longest valley-free path is bounded by a constant, then problem \hijack is
in P.
}

\begin{theorem}\label{theo:grexpall_plb_origin_bgp}
\greBgpOriginPlb
\end{theorem}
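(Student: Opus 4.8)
The plan is to exhibit an algorithm running in time $n^{O(k)}$, where $k$ bounds the length of the longest valley-free path. Since $m$ has only \simpleHijack capabilities, an attack is fully described by the set $A\subseteq N$ of $m$'s neighbors to which $m$ announces the spoofed origin of $\pi$; the sole obstacle to brute force is the $2^{|N|}$ choices of $A$. I would first enumerate, as a guess for the path selected at $s$ in a successful attack, every valley-free path $p^*=(s\ \dots\ n^*\ m)$ from $s$ to $m$. Because such a path has length at most $k$, there are only $n^{O(k)}$ of them. The heart of the argument is to show that, for each fixed $p^*$, a single canonical choice of $A$ can be computed in polynomial time, so that scanning all candidates suffices.

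The canonical choice rests on the fact that disruption \emph{by better class} is monotone and decomposes neighbor by neighbor. Call a neighbor $n\in N$ \emph{safe} for $p^*$ if announcing $\pi$ to $\{n\}$ alone does not disrupt $p^*$ by a path of better class; safety is decidable in polynomial time by simulating BGP to its stable state (which exists by Lemma~\ref{lemma:grexpall_plain_stable_state}) with $m$ announcing only to $n$. Every spoofed path to $m$ passes through exactly one neighbor of $m$, so any better-class path disrupting $p^*$ at one of its vertices originates from a single announcement, and by Lemma~\ref{lemma:grexpall_plain_better_class_propagation} such a path survives the addition of further announcements. Hence $p^*$ is disrupted by better class under an announcement set $B$ \emph{if and only if} some $n\in B$ is unsafe. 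I would therefore set $A^*:=\{n\in N : n \text{ is safe for } p^*\}$, simulate BGP to its stable state with $m$ announcing $\pi$ to exactly $A^*$, and accept iff the path selected at $s$ terminates at $m$.

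For correctness, soundness is immediate, since a successful test for some $p^*$ yields an explicit attack $A^*$. For completeness, suppose some attack $A$ succeeds and let $p^*$ be the path selected at $s$, which necessarily ends at $m$. As $p^*$ is selected it is not disrupted by better class, so by the decomposition every $n\in A$ is safe and $A\subseteq A^*$. Passing from $A$ to $A^*\supseteq A$ only adds announcements: by the decomposition $p^*$ remains free of better-class disruption, and the only paths thereby created are spoofed paths ending at $m$, never honest paths to $d$. By Lemma~\ref{lemma:grexpall_plain_better_class_disruption}, every valley-free $s$--$d$ path preferred over $p^*$ was already disrupted by better class under $A$, hence stays disrupted under $A^*$ by monotonicity. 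Finally, since $p^*$ is not disrupted by better class under $A^*$, Lemma~\ref{lemma:grexpall_origin_same_class_disruption} guarantees that $s$ selects some path $p_n\le_\lambda^s p^*$; as all honest paths preferred over $p^*$ are unavailable, $p_n$ must be one of the spoofed paths ending at $m$, so the hijack still succeeds under $A^*$ and is detected.

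The main obstacle is precisely this completeness step: controlling the interaction of same-class and better-class disruption when $A$ is enlarged to $A^*$. I must rule out both that the extra announcements let $s$ escape to an honest route and that a same-class disruption along $p^*$ silently redirects $s$'s traffic away from $m$. The first is excluded by the monotonicity of better-class disruption (no honest route preferred over $p^*$ can reappear), and the second by Lemma~\ref{lemma:grexpall_origin_same_class_disruption} together with the observation that every newly created path is a spoofed path to $m$, so that even a shortened replacement of $p^*$ still ends at $m$.
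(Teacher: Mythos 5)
Your proposal is correct and takes essentially the same route as the paper: you enumerate the $n^{O(k)}$ candidate valley-free $s$--$m$ paths, build the canonical maximal announcement set from per-neighbor monotonicity of better-class disruption, and prove completeness exactly as the paper does via Lemmas~\ref{lemma:grexpall_origin_same_class_disruption}, \ref{lemma:grexpall_plain_better_class_disruption} and~\ref{lemma:grexpall_plain_better_class_propagation} (showing $A\subseteq A^*$ and that enlarging the set preserves success). The only cosmetic difference is that you decide a neighbor's ``safety'' by simulating the singleton announcement to its stable state, whereas Alg.~\ref{algo:bgp_origin_plb_grexpall} uses the equivalent static test (no valley-free path through $(m,n)$ of better class to a vertex of $p_{sm}$, checkable via~\cite{ssz-ssir-09}), which Lemma~\ref{lemma:grexpall_plain_better_class_propagation} shows amounts to the same thing.
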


% version for limited space - noend option used together with indent=2em
\begin{algorithm}[bt]
\caption{Algorithm for the \hijack problem where $m$ has \simpleHijack capabilities and the longest valley-free path in the graph is bounded.}
\label{algo:bgp_origin_plb_grexpall}
\algsetup{indent=2em}
\begin{algorithmic}[1]
\STATE {\bf Input}: instance of \hijack problem, $m$ has \simpleHijack cheating capabilities;
\STATE {\bf Output}: an attack pattern if the attack exists, fail otherwise;
\STATE let $P_{sm}$ be the set of all valley-free paths from $s$ to $m$; 
\FORALL  {$p_{sm}$ in $P_{sm}$}
 \STATE let $w$ be the vertex of $p_{sm}$ adjacent to $m$; let $A$ be a set of vertices and initialize $A$ to $\{w\}$; let $N$ be the set of the neighbors of $m$;
 \FORALL {$n$ in $N \setminus \{ w \}$}
  \IF {there is no path $p$ through $(m,n)$ to a vertex $x$ of $p_{sm}$ such that $f^x(p)>f^x(p_{xm})$, where $p_{xm}$ is the subpath of $p_{sm}$ from $x$ to $m$}
   \STATE insert $n$ into $A$ 
  \ENDIF
 \ENDFOR
%  \IF {
% announcing the prefix from $m$ to the vertices in $A$ disrupts all paths from $s$ to $d$ more preferred than $p_{sm}$ and no path from $s$ to $d$ disrupts path $p_{sm}$}
 \IF {the attack succeeds when $m$ announces $\pi$ only to the vertices in $A$}
  \RETURN $A$
 \ENDIF
\ENDFOR
\RETURN fail
\end{algorithmic}
\end{algorithm}

% \vspace{-0.5cm}
\begin{proof}
We tackle the problem with Alg.~\ref{algo:bgp_origin_plb_grexpall}. 
First, observe that line 9 tests if a certain set of announcements causes a successful attack and, in that case, it returns the corresponding set of neighbors to whom $m$ announces prefix $\pi$. Hence, if Alg.~\ref{algo:bgp_origin_plb_grexpall} returns without failure it is trivial to see that it found a successful attack. 
Suppose now that there exists a successful attack $a^*$ from $m$ that is not found by Alg.~\ref{algo:bgp_origin_plb_grexpall}. Let $p_{sm}^*$ be the path selected by $s$ in attack $a^*$. 
% If more than one such attack exists, consider the one in which $s$ selects a certain path $p_{sm}^*$ such that $|p_{sm}^*|$ is minimized ({\bf  ???\footnote{or $\lambda$ -minimized ??? anyway, do we really need this assumption? I don't think so.}}). 
Let $ A^*$ be the set of neighbors of $m$ that receives prefix $\pi$ from $m$ in the successful attack.

Consider the iteration of the Alg.~\ref{algo:bgp_origin_plb_grexpall} where path $p_{sm}^*$ is analyzed in the outer loop. At the end of the iteration Alg.~\ref{algo:bgp_origin_plb_grexpall} constructs a set $A$ of neighbors of $m$. Let $a$ be an attack from $m$ where $m$ announces $\pi$ only to the vertices in $A$.

First, we prove that $A^* \subseteq A$.
Suppose by contradiction that there exists a vertex $n\in A^*$ that is not contained in $A$. It implies that there exists a valley-free path $p$ through $(m,n)$ to a vertex $x$ of $p_{sm}^*$ such that $f^x(p)>f^x(p_{xm})$, where $p_{xm}$ is the subpath of $p_{sm}^*$ from $x$ to $m$. Since $m$ announces $\pi$ to $n$, by Lemma~\ref{lemma:grexpall_plain_better_class_propagation}, we have that $x$ selects a path $p'$ of class at least $f^x(p)$, that is a contradiction since $p_{sm}^*$ would be disrupted by better class. Hence, $A^* \subseteq A$.

Now, we prove that attack $a$ is a successful attack for $m$.
Consider a valley-free path $p_{sd}$ from $s$ to $d$ that does not traverse $m$ and is preferred over $p_{sm}^*$. By Lemma~\ref{lemma:grexpall_plain_better_class_disruption} it is disrupted by better class in attack $a^*$. By Lemma~\ref{lemma:grexpall_plain_better_class_propagation}, since $A^*\subseteq A$, we have that also in $a$ path $p_{sd}$ is disrupted by better class. Let $x$ be the vertex adjacent to $s$ in $p_{sd}$. Observe that, vertex $s$ cannot have an available path $(s\ x)p$ to $d$ such that $(s\ x)p <_{\lambda}^{s} p_{sm}^*$, because $(s\ x)p$ must be disrupted by better class. 

Moreover, consider path $p_{sm}^*$. Since in $a^*$ path $p_{sm}^*$ is not disrupted by better class by a path to $d$, by Lemma~\ref{lemma:grexpall_plain_better_class_propagation}, there does not exist a path $p_{xd}'$ from a vertex $x$ of $p_{sm}^*$ to $d$ of class higher than $p_{xm}$, where $p_{xm}$ is the subpath of $p_{sm}^*$ from $x$ to $m$. Hence, path $p_{sm}^*$ cannot be disrupted by better class by a path to $d$. Also, observe that for each $n \in A$ there is no path $p$ through $(m,n)$ to a vertex $x$ of $p_{sm}^*$ such that $f^x(p)>f^x(p_{xm})$, where $p_{xm}$ is the subpath of $p_{sm}$ from $x$ to $m$. Hence, $p_{sm}^*$ can be disrupted only by same class. By Lemma~\ref{lemma:grexpall_origin_same_class_disruption}, we have that $s$ selects a path $p$ such that $p\le_{\lambda}^sp_{sm}^*$. Since path $p$ cannot be a path to $d$, attack $a$ is successful. This is a contradiction since we assumed that Alg.~\ref{algo:bgp_origin_plb_grexpall} failed.

Finally, since the length of the valley-free paths is bounded, the iterations of the algorithm where paths in $P_{sm}$ are considered require a number of steps that is polynomial in the number of vertices of the graph. Also, the disruption checks can be performed in polynomial time by using the algorithm in~\cite{ssz-ssir-09}.
\qed
\end{proof}

\section{\SBGP Gives Hackers Hard Times}\label{sect:sbgp}
% \section{\SBGP}\label{sect:sbgp}
% ------------------------------------------------------------------
% \vspace{-0.1cm}

We open this section by strengthening the role of \SBGP as a security protocol. Indeed, \SBGP adds more complexity to the problem of finding an attack strategy (Theorem~\ref{theo:grexpall_plb_sbgp}). 
After that we also provide an answer to a conjecture posed in~\cite{gshr-hsasirp-10} about hijacking and interception attacks in \SBGP when a single path is announced by the manipulator. In this case, we prove that every successful hijacking attack is also an interception attack (Theorem~\ref{theo:gr_sbgp_disruption}).

% We open this section by strengthening the role of \SBGP as a security protocol. Indeed, not only \SBGP reduces the action space of a malicious AS more than BGP~\cite{gshr-hsasirp-10}, but it also adds more complexity to the problem of finding an attack strategy (Theorem~\ref{theo:grexpall_plb_sbgp}). 
% After that we also provide an answer to a conjecture posed in~\cite{gshr-hsasirp-10} about hijacking and interception attacks in \SBGP when a single path is announced by the manipulator. In this case, we prove that every successful hijacking attack is also an interception attack (Theorem~\ref{theo:gr_sbgp_disruption}). 

%%%%%%%%%%%%%%%%%%%%%%%%%%%%%%%%%%%%%%%%%%%%%%%%%%%%%%%%%%%%%%%%%%%%%%%%%%%%%%%%
% SBGP BOUNDED ASPATH LENGTH - NPHARD
%%%%%%%%%%%%%%%%%%%%%%%%%%%%%%%%%%%%%%%%%%%%%%%%%%%%%%%%%%%%%%%%%%%%%%%%%%%%%%%%

\begin{theorem}\label{theo:grexpall_plb_sbgp}
If the manipulator has \SBGP cheating capabilities and the length of the longest valley-free path is bounded by a constant, then problem \hijack is \NPhard.
\end{theorem}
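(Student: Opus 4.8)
The plan is to prove \NPhard-ness by a polynomial-time reduction from \threesat that builds a \GRexpall-compliant \SBGP instance $G$ in which every valley-free path has length bounded by a small constant. The conceptual starting point is the reason behind why Theorem~\ref{theo:grexpall_plb_origin_bgp} is in P: for \simpleHijack capabilities the effect of announcing $\pi$ to a set of neighbors is \emph{monotone} (disrupting a candidate path by better class when $\pi$ is announced to $n_1$ or to $n_2$ separately implies disrupting it when announcing to both), so a greedy choice of neighbors is optimal. Under \SBGP this monotonicity breaks, because $m$ may forward \emph{different} legitimate paths to different neighbors and these choices interact non-monotonically. My reduction will turn exactly this extra freedom into the power to guess a truth assignment while keeping all paths short.

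The construction will use three kinds of gadgets. Variable gadgets: $d$ originates $\pi$ and is joined to $m$ by $n$ constant-length customer-provider paths, so that for each variable $X_i$ the manipulator legitimately receives a path $R_i=(m\ \dots\ d)$ that, under \SBGP, it is entitled to re-export. Each $X_i$ has two manipulator-neighbors $x_i$ and $\bar x_i$ heading two valley-free customer chains; forwarding $R_i$ towards $x_i$ (resp.\ $\bar x_i$) encodes $X_i=\top$ (resp.\ $X_i=\bot$). Clause gadgets: for each clause $C_j$ there is a competing valley-free path $Q_j$ from $s$ to $d$ that $s$ prefers over the intended attack path and that passes through a vertex $c_j$; the chain leaving $x_i$ (resp.\ $\bar x_i$) reaches $c_j$ precisely when $X_i$ occurs positively (resp.\ negatively) in $C_j$. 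When a forwarded path arrives at $c_j$ along a customer edge, $c_j$ prefers it by GR3 and $Q_j$ is disrupted by better class; that this disruption actually propagates up the chain to $c_j$ is exactly Lemma~\ref{lemma:grexpall_plain_better_class_propagation}. Attack path: finally a short valley-free path $(s\ \dots\ m)$ is added which $s$ selects once, and only once, every $Q_j$ is disrupted.

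With these pieces the equivalence is argued in the two usual directions. If $F$ has a satisfying assignment, $m$ forwards each $R_i$ consistently with it; every clause then contains a satisfied literal, so every $Q_j$ is disrupted by better class and, by Lemma~\ref{lemma:grexpall_plain_better_class_disruption}, $s$ ends up selecting the attack path, a hijack. Conversely, from a successful attack I read back an assignment from the neighbors to which $m$ forwards paths: Lemma~\ref{lemma:grexpall_plain_better_class_disruption} forces every $Q_j$ preferred over the attack path to be disrupted by better class, and Lemma~\ref{lemma:grexpall_plain_better_class_propagation} guarantees such a disruption can only originate from a path forwarded along the matching customer chain, hence from a satisfied literal, so the assignment satisfies $F$. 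Since all paths have constant length, $G$ has polynomial size and is built in polynomial time.

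The hard part will be engineering the variable gadget so that the encoded choice is genuinely \emph{binary} while keeping all paths of constant length. If $m$ could trigger both the positive and the negative chain of some $X_i$ for free, every clause would be disrupted and the attack would always succeed, collapsing the reduction; in the unbounded construction of Theorem~\ref{theo:grexpall_simple-hijack_bgp} this was prevented by routing the length-$2n$ \intLengthStruct through every variable, so that activating both $x_i$ and $\bar x_i$ destroyed the attack path, but that device inflates path length with $n$ and is unavailable here. I will instead enforce the choice \emph{locally}, in constant size: either by a single selector vertex $g_i$ that, on receiving $R_i$, can export only one best path, so at most one polarity of $X_i$ propagates; or by a short guard sub-path shared by the $x_i$ and $\bar x_i$ chains that is itself disrupted by better class whenever both are activated, making over-triggering strictly counter-productive for $m$. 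Checking that one of these constant-size designs simultaneously (i) forbids activating both literals of a variable, (ii) still lets an honest assignment disrupt all clause paths, and (iii) never lets the disruptions reach and break the attack path itself, is the delicate and error-prone core of the argument.
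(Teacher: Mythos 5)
There is a genuine gap, and you have named it yourself: the variable gadget that makes the true/false choice \emph{binary} is exactly the crux of the reduction, and your proposal leaves it unconstructed, offering only two unspecified candidate designs (a ``selector vertex'' or a ``guard sub-path'') together with a list of properties (i)--(iii) that would still have to be verified. Without that gadget there is no reduction: as you observe, in your setup with two manipulator-neighbors $x_i$ and $\bar x_i$ nothing in \SBGP stops $m$ from forwarding $R_i$ to \emph{both} of them (announcing to an arbitrary subset of neighbors is precisely what $m$ is allowed to do), so every clause path would always be disruptable and the instance would always be a ``yes'' instance. Your guard-sub-path idea also has a structural problem in this bounded-length setting: to punish double activation the guard must lie on the attack path, and making one short attack path sensitive to \emph{every} variable's guard while remaining of constant length is essentially the same obstruction you correctly identified in the length-$2n$ \intLengthStruct of Theorem~\ref{theo:grexpall_simple-hijack_bgp}.

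The paper resolves this with a different and quite specific mechanism that your proposal does not reach. First, it reduces from the restricted \threesat in which each variable occurs at most three times and each positive literal at most once. Second, and crucially, each variable $X_i$ has a \emph{single} useful manipulator-neighbor $x_i$, and the binary choice is not \emph{which neighbor} receives an announcement but \emph{which of two legitimately received paths} ($P_{t_i}$ or $P_{r_i}$) is forwarded to $x_i$: since $m$ sends at most one announcement per neighbor, exclusivity is automatic, with no extra gadgetry. The two paths are engineered to contain different ``poison'' vertices so that BGP loop detection selects which clause paths get disrupted: $P_{t_i}$ contains $p_i$, so $p_i$ discards it and the disruption travels via $p'_i$ to the clause containing the positive literal of $X_i$; $P_{r_i}$ contains the clause vertex $c_{j,l}$ of the unique positive occurrence, so that vertex discards it and the disruption travels via $p_i$ to the clauses containing negative literals. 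The proof must also (and does) check that all other actions available to $m$ --- announcing anything to $t_i$ or $r_i$, sending $(m\ d)$ or $P_{x_i}$ to $x_i$, or cross-forwarding a path into another variable's gadget --- disrupt the \intLengthStruct by better class and are therefore self-defeating; this case analysis replaces your unverified conditions (i)--(iii). Your high-level frame (four structures, Lemmas~\ref{lemma:grexpall_plain_better_class_propagation} and~\ref{lemma:grexpall_plain_better_class_disruption} for the two directions) matches the paper, but the loop-detection-based, one-neighbor-per-variable encoding is the missing idea without which the argument does not go through.
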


% \vspace{-0.5cm}
\begin{proof}
We reduce from a version of  \threesat where each variable appears at most three times and each positive literal at most once~\cite{p-cc-94}. Let $F$ be a logical formula in conjunctive normal form  with variables $X_1 \dots X_n$ and clauses $C_1 \dots C_h$. We build a
%  \GRexpall compliant 
BGP instance $G$ (see Fig.~\ref{fig:bgp_sbgp_plb_grexpall}) consisting of $4$ structures: \textsc{Intermediate}, \textsc{Short}, \textsc{Long}, and \textsc{Disruptive}.
% BGP instance $G$  consisting of $4$ structures: the \intLengthStruct, the \shoLengthStruct, the \longLengthStruct, and the \disruptiveLengthStruct (Fig.\ref{fig:bgp_sbgp_plb_grexpall}).
% 
% \begin{figure}[t]
% % \centering
% \begin{minipage}[t]{0.49\textwidth}%
% % \vspace{0.3in}
% \includegraphics[width=1\textwidth]{figures/empty}
% \caption{Free space for a picture}\label{fig:bgp_origin_spoofing_bgp_grexpall}
% \end{minipage}%
% \hfill
% \begin{minipage}[t]{0.49\textwidth}%
% \includegraphics[width=1\textwidth]{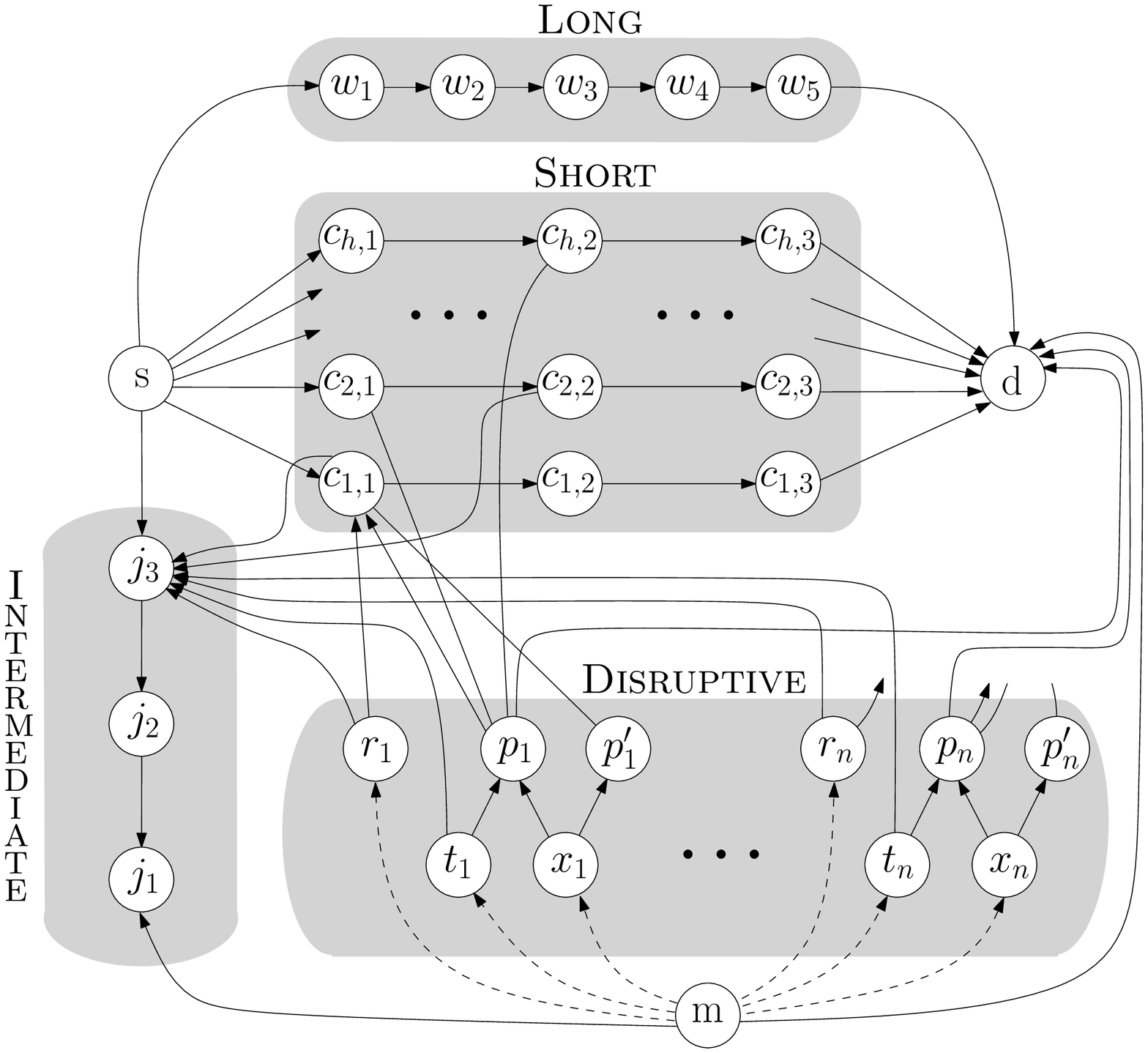}
% \caption{Reduction of a constrained \threesat problem to the \hijack problem when $m$ has \SBGP cheating capabilities.}\label{fig:bgp_sbgp_plb_grexpall}
% \end{minipage}
% \end{figure}
% 
The \longLengthStruct is a directed path of length $6$ with edges $(s,w_1)$, $(w_1,w_2)$, \dots, $(w_{4},w_{5})$, and~$(w_5,d)$. 
The \intLengthStruct consists of a valley-free path joining $m$ and $s$. It has length $4$ and it is composed by a directed path $(s\ j_3\ j_2\ j_1)$, and a directed edge~$(m,j_1)$.
% The \intLengthStruct is a the only valley-free path joining $m$ and $s$ that is shorter than the one in the \longLengthStruct. It has length $4$ and it is composed by a directed path $(s,j_3)$, $(j_3,j_2)$, $(j_2,j_1)$, and an directed edge~$(m,j_1)$. 
The \shoLengthStruct has $h$ directed paths from $s$ to $d$. Each path has length at most $4$ and has edges $(s,c_{i,1})$, $(c_{i,1},c_{i,2})$, \dots, $(c_{i,v(C_i)},d)$ ($1\leq i\leq h$), where $v(C_i)$ is the size of  $C_i$.
The \disruptiveLengthStruct contains, for each variable $X_i$  vertices, $r_i$, $t_i$, $x_i$, $p_i$ and $p'_i$. 
Vertices, $r_i$, $t_i$, and $x_i$, are reached via long directed paths from $m$ and  are connected by  $(t_i, p_i)$, $(x_i, p_i)$, $(x_i, p'_i)$, $(r_i, j_3)$, $(p_i, j_3)$, and $(p_i, d)$.
Finally, suppose $X_i$ occurs in clause $C_j$ with a literal in position $l$. If the literal is negative the undirected edge $(p_i,c_{j,l})$ is added, otherwise, edges $(p_i,c_{j,l})$, $(r_i,c_{j,l})$, $(c_{j,l},j_3)$, and undirected edge $(p'_i,c_{j,l})$ are added. 
An edge connects $m$ to $d$.
Vertices $s$, $d$, and $m$ have source, destination, and manipulator roles, respectively. 

Intuitively, the proof works as follows. The paths that allow traffic to go from $s$ to $m$ are only those passing through the \disruptiveLengthStruct and the one in the \intLengthStruct. Also, the path through the \intLengthStruct is shorter than the one through the \longLengthStruct, which is shorter than those through the \disruptiveLengthStruct.
If $m$ does not behave maliciously, $s$ receives only paths traversing the \shoLengthStruct and the \longLengthStruct. In this case $s$ selects one of the paths in the \shoLengthStruct according to its tie break policy.
If $m$ wants to attract traffic from $s$, then:
\begin{inparaenum}[(i)]
 \item  path $(j_3\  j_2\  j_1\ m\ d)$ must be available at $s$ and
 \item all paths contained in the \shoLengthStruct must be disrupted by a path announced by $m$.
\end{inparaenum}
If (i) does not hold, then $s$ selects the path contained in either the \longLengthStruct or  the \shoLengthStruct. If (ii) does not hold, then $s$ selects a path contained in the \shoLengthStruct.

Our construction is such that the \threesat formula is true iff $m$ can attract the traffic from $s$ to $d$.  To understand the relationship with the \threesat problem, consider the behavior of $m$ with respect to variable $X_1$ (see Fig.~\ref{fig:bgp_sbgp_plb_grexpall}) that appears with a positive literal in the first position of clause $C_1$, a negative literal in the first position of $C_2$ and a negative literal in the second position of $C_h$. 

% To understand the interplay between our construction and the \threesat problem, consider the behavior of $m$ with respect to variable $X_1$ (see Fig.~\ref{fig:bgp_sbgp_plb_grexpall}) that appears with a positive literal in the first position of clause $C_1$, a negative literal in the first position of $C_2$ and a negative literal in the second position of $C_h$. 
% {\bf (A detailed proof can be found in the appendix)}.

First, we explore the possible actions that $m$ can perform in order to disrupt paths in the \shoLengthStruct. Since $m$ has \SBGP cheating capabilities, $m$ is constrained to propagate only the announcements it receives. If $m$ does not behave maliciously, $m$ receives path $(d)$ from $d$ and paths $P_{r_1}$, $P_{t_1}$, and $P_{x_1}$ from $r_1$, $t_1$, and $x_1$, respectively. These paths have the following properties: $P_{r_1}$ contains vertex $c_{1,1}$ that is contained in the path of the \shoLengthStruct that corresponds to clause $C_1$; paths $P_{t_1}$ and $P_{x_1}$ both contain vertex $p_1$ and do not contain vertex $c_{1,1}$ since $p_1$ prefers $(p_1\ d)$ over $(p_1\ c_{1,1}\ c_{1,2}\ c_{1,3}\ d)$.

% appendix?
Now, we analyze what actions are not useful for $m$ to perform an attack. If $m$ issues any announcement towards $t_1$ or $r_1$ the path traversing the \intLengthStruct is disrupted by better class. Also, if $m$ sends a path $P_{r_1}, P_{t_1}$, or $P_{x_1}$ towards  $r_j$, $t_j$, or $x_j$, with $j=2,\dots,n$, the path traversing the \intLengthStruct is disrupted by better class.
%{\bf (add more details in the appendix)}. 
Also, if $m$ sends $(m\ d)$ to $x_1$, then the path traversing the \intLengthStruct is disrupted from $c_{1,1}$ by better class. If $m$ sends $P_{x_1}$ to $x_1$, then it is discarded by $x_1$ because of loop detection. In each of these cases $m$ cannot disrupt any path traversing the \shoLengthStruct without disrupting the path traversing the \intLengthStruct. 
Hence, $m$ can disrupt path in the \shoLengthStruct without disrupting the path traversing the \intLengthStruct  only announcing $P_{r_1}$ and $P_{t_1}$ from $m$ towards $x_1$. 

% Hence, only announcing $P_{r_1}$ and $P_{t_1}$ from $m$ towards $x_i$ may disrupt path in the \shoLengthStruct without disrupting the path traversing the \intLengthStruct. 

If path $P_{t_1}$ is announced to $x_1$, then $p_1$ discards that announcement because of loop detection and path $(s\ c_{1,1}\ c_{1,2}\ c_{1,3}\ d)$  is disrupted from $p'_1$ by better class. Also, the path through the \intLengthStruct remains available because the announcement through $p'_1$ cannot reach $j_3$ from $c_{1,1}$, otherwise valley-freeness would be violated. Hence, announcing path $P_{t_1}$, corresponds to assigning true value to variable $X_1$, since the only path in the \shoLengthStruct that is disrupted is the one that corresponds to the clause that contains the positive literal of $X_1$.

If path $P_{r_1}$ is announced to $x_1$, then $c_{1,1}$ discards that announcement because of loop detection and both paths $(s\ c_{2,1}\ c_{2,2}\ c_{2,3}\ d)$ and $(s\ c_{h,1}\ c_{h,2}\ c_{h,3}\ d)$ are disrupted by better class from $p_1$. Also, the path through the \intLengthStruct remains available because the announcement through $p_1$ cannot reach $j_3$ from $c_{2,1}$ or $c_{h,2}$, otherwise valley-freeness would be violated. Hence, announcing path $P_{r_1}$, corresponds to assigning false value to variable $X_1$, since the only paths in the \shoLengthStruct that are disrupted are the ones that correspond to the clauses that contain a negative literal of $X_1$.

Hence, announcing path $P_{t_1}$ ($P_{r_1}$) from $m$ to $x_1$ corresponds to assigning the true (false) value to variable $X_1$. As a consequence, $m$ can disrupt every path in the \shoLengthStruct without disrupting the path in the \intLengthStruct iff formula $F$ is satisfiable.\qed
\end{proof}

% % % % % % SKETCHED PROOF - STACS % % % % % %
\remove{

\begin{proof}
The reduction is similar to the proof of Theorem~\ref{theo:grexpall_ndb_plb_asset_bgp}. The BGP instance $G$ is  composed by 4 structures (Fig.~\ref{fig:bgp_sbgp_plb_grexpall}).
% The proof is based on a reduction from a version of the \threesat problem where each positive literal appears at most once and each negative literal at most twice. Also, the BGP instance corresponding to the constrained \threesat instance is similar to that described in the proof of Theorem~\ref{theo:grexpall_ndb_plb_asset_bgp} and consists of $4$ structures: the \intLengthStruct, the \shoLengthStruct, the \longLengthStruct, and the \disruptiveLengthStruct. See Fig.~\ref{fig:bgp_sbgp_plb_grexpall}.

The \intLengthStruct is a valley-free path of length $4$ with top vertex $j_1$. 
The \shoLengthStruct is the same of the proof of Theorem~\ref{theo:grexpall_ndb_plb_asset_bgp}. 
The \longLengthStruct is a directed path of length $6$.
The \disruptiveLengthStruct contains, for each variable $X_i$  vertices, $r_i$, $t_i$, $x_i$, $p_i$ and $p'_i$. 
Vertices, $r_1$, $t_1$, and $x_1$, are reached via long directed paths from $m$ and  are connected by  $(t_i, p_i)$, $(x_i, p_i)$, $(x_i, p'_i)$, $(r_i, j_3)$, $(p_i, j_3)$, and $(p_i, d)$.
Finally, suppose $X_i$ occurs in clause $C_j$ with a literal in position $l$. As in Theorem~\ref{theo:grexpall_ndb_plb_asset_bgp}, if the literal is negative the undirected edge $(p_i,c_{j,l})$ is added, otherwise, edges $(p_i,c_{j,l})$, $(r_i,c_{j,l})$, $(c_{j,l},j_3)$, and undirected edge $(p'_i,c_{j,l})$ are added. 
An edge connects $m$ to $d$.

Analogously to the proof of Theorem~\ref{theo:grexpall_ndb_plb_asset_bgp}, if $m$ wants to attract traffic from $s$, then:
\begin{inparaenum}[(i)]
 \item the path $(m\ j_1\ j_2\ j_3\ s)$ that traverses the \intLengthStruct has to reach $s$ and
 \item all paths contained in the \shoLengthStruct have to be disrupted by a path announced by $m$.
\end{inparaenum}

In contrast with the proof of  Theorem~\ref{theo:grexpall_ndb_plb_asset_bgp}, since $m$ has \SBGP cheating capabilities, $m$ is constrained to propagate only the announcements it receives. 

Consider the behavior of $m$ with respect to $x_1$ in Fig.~\ref{fig:bgp_sbgp_plb_grexpall}.
Observe that if $m$ issues any announcement towards $t_1$ or $r_1$ the path traversing the \intLengthStruct is disrupted by better class.
On the other hand, one of the two announcements received via $t_1$ and $r_1$ can be propagated by $m$ towards $x_1$. Also, observing that the announcement received via $t_1$ ($r_1$) contains $p_1$ ($c_{1,1}$) in its AS-path, it is easy to see that announcing to $x_1$ the path received via $t_1$ ($r_1$) corresponds to announcing, as in Theorem~\ref{theo:grexpall_ndb_plb_asset_bgp}, a path containing $p_q$ ($c_{1,1}$) in the \ASSET of the path announced from $m$ to $j_1$.
% If $m$ wants to disrupt path $(d\ c_{1,3}\ c_{1,2}\ c_{1,1}\ s)$ (which corresponds to making clause $C_1$ true) it might propagate towards $x_1$ the announcement received via $t_1$, which contains $p_1$. This has the effects of blocking the announcement through $p_1$ because of loop detection and of disrupting $(d\ c_{1,3}\ c_{1,2}\ c_{1,1}\ s)$ from $p'_1$ by better class. Also, the path through the \intLengthStruct remains available because the announcement through $p'_1$ cannot reach $j_3$, otherwise valley-freeness would be violated.
% If $m$ wants to disrupt path $(d\ c_{2,3}\ c_{2,2}\ c_{2,1}\ s)$ (which corresponds to making clause $C_2$ true) it might propagate towards $x_1$ the announcement received via $r_1$, which contains $c_{1,1}$, which corresponds to the unique positive literal of variable $X_1$. The announcements that reach $c_{1,1}$ are discarded because of loop detection. The announcement that reaches $c_{2,1}$ disrupts $(d\ c_{2,3}\ c_{2,2}\ c_{2,1}\ s)$ by better class.
% 
The proof that if $F$ is satisfiable, then $m$ can attract traffic from $s$ and vice versa is analogous to the proof of Theorems~\ref{theo:grexpall_simple-hijack_bgp} and~\ref{theo:grexpall_ndb_plb_asset_bgp}.
\end{proof}

}

% We conclude the \SBGP section considering the periphery of the Internet. Although the following result shows that if  the degree of a vertex is small, it is easy to devise an attack strategy, it does not mean that \SBGP is not secure. Indeed, in~\cite{gshr-hsasirp-10} it has been shown that the effectiveness of an attack from the periphery is low and negligible.

%%%%%%%%%%%%%%%%%%%%%%%%%%%%%%%%%%%%%%%%%%%%%%%%%%%%%%%%%%%%%%%%%%%%%%%%%%%%%%%%
% SBGP + BOUNDED ASPATH LENGTH + BOUNDED NEIGHBOR DEGREE - P
%%%%%%%%%%%%%%%%%%%%%%%%%%%%%%%%%%%%%%%%%%%%%%%%%%%%%%%%%%%%%%%%%%%%%%%%%%%%%%%%

\newcommand{\greSbgpNdb}{
If the manipulator has \SBGP cheating capabilities and its degree is bounded by a constant, then problem \hijack is in P.}

\begin{theorem}\label{theo:grexpall_plb_ndb_sbgp}
\greSbgpNdb
\end{theorem}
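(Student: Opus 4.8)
The plan is to turn the bounded degree, together with the \SBGP constraint, into a constant-size search space. The key observation is that under the \SBGP cheating capabilities every path $m$ announces to a neighbor $n_i$ has the form $(m\ u)P$ where $(u)P$ is a path that the neighbor $u$ actually announced to $m$. In any stable state each neighbor of $m$ selects and hence announces to $m$ \emph{at most one} path. Therefore, if $\deg(m)=k$ is a constant, in any stable state $m$ has received at most $k$ distinct paths $Q_1,\dots,Q_k$, one per neighbor (some neighbors may announce nothing, e.g.\ when \GRexpall or loop detection forbids it). Consequently the action of $m$ in a stable state is completely described by a function $\sigma$ assigning to each of the $k$ neighbors either the symbol $\bot$ (``announce nothing'') or the index of the neighbor whose received path is forwarded; there are at most $(k+1)^k=O(1)$ such \emph{symbolic strategies}. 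Note that this already uses only the degree bound: the length of the received paths is irrelevant, which is why the statement need not assume a bound on the \ASpath length.

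For each symbolic strategy $\sigma$ I would pin $m$ to the corresponding reactive behavior: to neighbor $n_i$ it steadily re-announces $(m)\,Q_{\sigma(n_i)}$, where $Q_{\sigma(n_i)}$ is whatever path $m$ currently receives from neighbor $n_{\sigma(n_i)}$ (and nothing if $\sigma(n_i)=\bot$, or if that neighbor currently announces nothing or a path already containing $m$). With $m$ fixed to this behavior and all other vertices following \GRexpall, I would run ordinary routing to convergence, invoking an extension of Lemma~\ref{lemma:grexpall_plain_stable_state}: the dispute-wheel argument still applies because the honest part of the network is \GRexpall and $m$ only propagates paths it has legitimately received, so no valley is introduced at honest nodes. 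The limit is reached by simulation in time polynomial in $|V|$ (all paths are simple, hence shorter than $|V|$). Testing whether the resulting stable state directs the traffic from $s$ to $m$ is then immediate, and the whole algorithm performs $O((k+1)^k)$ such polynomial simulations, i.e.\ it runs in polynomial time.

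For soundness, if some $\sigma$ yields a stable state in which $s$'s traffic reaches $m$, that state is realized by \SBGP-legal announcements, so a \hijack exists. For completeness, suppose a successful attack exists and fix one of its stable states $S^\ast$. Reading off, for each neighbor $n_i$, which received path $m$ forwards to it in $S^\ast$ defines a symbolic strategy $\sigma^\ast$; since in $S^\ast$ the paths $m$ forwards are exactly the ones it receives, $S^\ast$ is itself a stable state of the reactive system associated with $\sigma^\ast$. Provided this stable state is unique, it is the one the simulation for $\sigma^\ast$ computes, and the algorithm reports success.

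The main obstacle I expect is precisely the middle step: making the ``reactive node'' rigorous and guaranteeing that fixing $\sigma$ determines a \emph{unique} stable state reached by the simulation. The difficulty is the interdependence created by \SBGP --- $m$'s forwarding to $n_i$ may propagate through the honest network, return to another neighbor $n_j$, and change the very path $m$ receives from $n_j$, hence change what $m$ forwards. Two things must be established: (i) the convergence conclusion of Lemma~\ref{lemma:grexpall_plain_stable_state} when $m$'s announcements are not a fixed set but are slaved to its inputs, and (ii) uniqueness of the limit, independent of the activation order, for which I would lean on the \emph{policy consistency} of \GRexpall remarked after Lemma~\ref{lemma:grexpall_plain_stable_state}. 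Uniqueness is essential for completeness: without it the simulation might converge to a non-hijacking stable state for $\sigma^\ast$ while a hijacking one also exists, and the algorithm would wrongly report failure.
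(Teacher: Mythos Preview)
Your enumeration of $(k+1)^k$ symbolic strategies is exactly the paper's argument; its entire proof is the one-line observation that $m$ has at most $k$ available paths (one per neighbor) plus the empty path, hence $(k+1)^k$ cases to brute-force.

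Where you diverge is in the ``reactive node'' formalization, and this is an unnecessary detour. In the paper's model the attack is a \emph{fixed} set of announcements: $m$ picks, once and for all, one of the $\le k$ paths it received in the honest stable state (or $\bot$) for each neighbor, and then Lemma~\ref{lemma:grexpall_plain_stable_state} applies verbatim because $m$ is now announcing steadily. There is no feedback loop to worry about: even if a neighbor $u$ subsequently changes what it announces to $m$, the \SBGP constraint ``$u$ announced $P$ to $m$ in the past'' remains satisfied, and $m$ keeps announcing the original $(m\ u)P$. So the convergence obstacle you flag dissolves once you drop the reactive coupling and treat the announcements as constants. The uniqueness concern you raise is real but is handled in the paper's setting by the \GRexpall conditions (no dispute wheel, policy consistency), which the paper invokes implicitly; you are right that this is needed for completeness of the brute-force search, and the paper simply takes it for granted.
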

\begin{proof}
Observe that if the manipulator $m$ has \SBGP cheating capabilities, the degree of
the manipulator's vertex is bounded by a constant $k$, then problem \hijack is
in P. In fact, since $m$ has at most $k$ available paths plus the empty path, a brute force approach approach needs to explore $(k+1)^k$ number of possible cases.\qed
\end{proof}

%%%%%%%%%%%%%%%%%%%%%%%%%%%%%%%%%%%%%%%%%%%%%%%%%%%%%%%%%%%%%%%%%%%%%%%%%%%%%%%%
% SBGP + one path announced => HIJACK = INTERCEPTION
%%%%%%%%%%%%%%%%%%%%%%%%%%%%%%%%%%%%%%%%%%%%%%%%%%%%%%%%%%%%%%%%%%%%%%%%%%%%%%%%
% \vspace{-0.2cm}
To study the relationship between hijacking and interception we introduce the following technical lemma.
% Now, we prove that when a single path is announced by the manipulator, every successful hijacking attack is also an interception attack.
% We introduce the following technical lemma.

\newcommand{\lemmaSbgpDisruption}{
Let $G$ be a \GRexpall compliant BGP instance, let $m$ be a vertex with \SBGP cheating capabilities, and let $d \neq m$ be any vertex of $G$. All vertices that admit a class $c$ valley-free path to $d$ not containing $m$ have an available path of class $c$ or better to $d$, irrespective of the paths propagated by $m$ to its neighbors.}
% \vspace{-0.1cm}
\begin{lemma}\label{lemm:gr_sbgp_disruption}
\lemmaSbgpDisruption
\end{lemma}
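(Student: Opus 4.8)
The plan is to combine two ingredients: the fact that \SBGP forces every announced route to be a genuine AS-path terminating at $d$, and the class-propagation guarantee of Lemma~\ref{lemma:grexpall_plain_better_class_propagation}. Fix a vertex $v$ together with a class-$c$ valley-free path $p=(v\ \dots\ d)$ to $d$ that does not contain $m$, and consider any routing state reached while $m$ steadily announces arbitrary \SBGP-admissible paths to its neighbors. I want to exhibit, in that state, an available path of class at least $c$ from $v$ to $d$.

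First I would establish the key invariant: in the single-prefix instance with $d$ originating $\pi$, every path that is ever available at any vertex is a path whose last vertex is $d$. This follows by a straightforward induction on the order in which announcements are generated. The origin $d$ emits only $(d)$; an honest vertex exports its currently selected path, which ends at $d$ by the induction hypothesis; and---this is exactly where the \SBGP capability enters---$m$ may announce $(m\ u)P$ only when $u$ has previously announced $P$ to $m$, so $P$ already ends at $d$ and therefore so does $(m\ u)P$. Since $m$ can neither originate $\pi$ nor invent a suffix, no announced AS-path can terminate anywhere but $d$, regardless of to which neighbors $m$ (mis)directs its routes, whether it violates GR2, or which announcements loop detection discards.

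With the invariant in hand I would apply Lemma~\ref{lemma:grexpall_plain_better_class_propagation} to the path $p$. Because $p$ avoids $m$, all of its internal vertices are honest and obey the NE policy, and since $d$ exports $(d)$ to its neighbor on $p$ (the single edge being trivially valley-free), the lemma yields that $v$ selects a path $p_v$ with $f^{v}(p_v)\ge f^{v}(p)=c$. The selected path is by definition available at $v$, and by the invariant $p_v$ ends at $d$; hence $p_v$ is an available path of class at least $c$ to $d$, as required. The conclusion holds irrespective of $m$'s announcements, because the class lower bound supplied by Lemma~\ref{lemma:grexpall_plain_better_class_propagation} is a ``selects class at least'' statement that is only reinforced, never weakened, by the extra genuine routes $m$ may inject.

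I expect the main obstacle to be pinning down the invariant cleanly, since it is precisely the point that separates \SBGP from plain BGP: under plain BGP the manipulator could announce a bogus origin or a fabricated short route that does \emph{not} reach $d$, leaving some vertex preferring an attractive-but-fake path, so the class-$c$-to-$d$ guarantee would fail. Care is also needed to argue that $m$'s freedom to export to the ``wrong'' neighbors and to send different paths to different neighbors never creates an available path whose endpoint differs from $d$, and that the selected path $p_v$ at $v$ remains a genuine route to $d$ even when it happens to traverse $m$.
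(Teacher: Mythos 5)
Your proof is correct, and it reaches the lemma by a different decomposition than the paper. The paper proves the statement by a self-contained induction along the given path $p=(v_n\ \dots\ v_1)$ to $d$: the base case is the neighbor $v_2$ of $d$, and the inductive step shows that if $v_i$ has an available path of class at least $f^{v_i}(v_i\ \dots\ v_1)$, then its selected path $p^*$ has at least that class, is exportable to $v_{i+1}$ by the NE policy (valley-freeness of $(v_{i+1}\ v_i)p^*$ follows from that of $(v_{i+1}\ \dots\ v_1)$), and yields an available path at $v_{i+1}$ of the required class, taking the subpath of $p^*$ from $v_{i+1}$ when $p^*$ contains $v_{i+1}$. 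This is essentially the same induction as the paper's own proof of Lemma~\ref{lemma:grexpall_plain_better_class_propagation}, re-run from $v_1=d$; you instead invoke that lemma directly with $v_1=d$ and $p_1=(d)$, which is legitimate since its statement is capability-agnostic and all internal vertices of $p$ are honest because $p$ avoids $m$. The genuine added value of your route is the explicit invariant that under \SBGP cheating capabilities every announced (hence every available) AS-path terminates at $d$: Lemma~\ref{lemma:grexpall_plain_better_class_propagation} alone only bounds the \emph{class} of the selected path and says nothing about its endpoint, so the invariant is exactly what upgrades ``selects a path of class at least $c$'' to ``has an available path of class at least $c$ \emph{to $d$}'' --- and it is precisely where \SBGP (as opposed to \simpleHijack capabilities) enters. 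The paper leaves this endpoint fact implicit, which your version makes rigorous; conversely, the paper's inline induction is self-contained and does not need to verify the implicit hypotheses of the reused lemma. One cosmetic point: the class $f^{d}((d))$ of the trivial path is undefined (class is determined by the first edge), so strictly you should anchor the induction at the neighbor of $d$ receiving $(d)$, exactly as the paper's base case $i=2$ does.
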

\begin{proof}
Let $p=(v_n\ \dots\ v_1)$ be a valley-free path to $d$ not containing $m$. 
We prove by induction on vertices $v_1,\dots,v_n$ that each vertex $v_i$ has an available path of class $f^{v_i}(v_i\ \dots\ v_1)$ or better. In the base case $i=2$,  $v_2$ is directly connected to $d$ and the statements trivially holds.
Suppose that vertex $v_{i}$, with $i > 2$, has an available path of class $f^{v_{i}}(v_{i}\ \dots\ v_1)$.
Hence, $v_i$ selects a path $p^*$ such that $f^{v_{i}}(p^*)\ge f^{v_{i}}(v_{i}\ \dots\ v_1)$. Also, since 
$(v_{i+1}\ v_{i}\ \dots\ v_1)$ is valley-free even $(v_{i+1}\ v_i)p^*$ is valley-free. Then, $v_{i}$ announces (because of the NE policy) its best path $p^*$ to $v_{i+1}$. 
There are two possible cases: either $p^*$ does not contain $v_{i+1}$ or not. In the first case, path $(v_{i+1}\ v_i)p^*$ is available at  $v_{i+1}$ and the statement holds. In the second case, consider the subpath $p^*_{v_{i+1}}$ of $p^*$ from $v_{i+1}$ to $d$. The statement easily follows because $f^{v_{i+1}}(p^*_{v_{i+1}})\ge f^{v_{i+1}}((v_{i+1}\ v_i)p^*$. \qed
\end{proof}

\begin{theorem}\label{theo:gr_sbgp_disruption}
% Let $N'$ be the set of the neighbors to whom a manipulator $m$ with \SBGP cheating capabilities is announcing a path.
%If the manipulator $m$ has \SBGP cheating capabilities, let $N'$ be the set of the neighbors of $m$ to whom it announces a path. 
Let $m$ be a manipulator with \SBGP cheating capabilities.
If $m$ announces the same path to any arbitrary set of its neighbors, then every successful hijacking attack is also a successful interception attack.
If $m$ announces different paths to different vertices, then the hijacking may not be an interception.
\end{theorem}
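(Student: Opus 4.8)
The plan is to prove the two halves separately, unifying them through a single idea: a path advertised by $m$ shields its own suffix against forwarding loops, and this shielding is complete precisely when $m$ advertises one and the same path to everybody. The positive half will then follow from loop detection plus Lemma~\ref{lemm:gr_sbgp_disruption}, while the negative half needs an explicit \GRexpall counterexample in which the shielding is deliberately broken.

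For the positive part I would argue as follows. Let $m$ advertise a single path $P=(m\ u)P'$, where, by the \SBGP capability, $P'=(u\ \dots\ d)$ is a path that $u$ actually announced to $m$; hence $P'$ ends at the true origin $d$ and does not contain $m$. Since $m$ exports the \emph{same} $P$ to every chosen neighbor and BGP only prepends, every path through $m$ that can ever propagate carries $P'$ as a suffix, and therefore contains every vertex of $P'$ — in particular $u$. By loop detection none of these vertices can select a route through $m$, so their selected routes avoid $m$; by Lemma~\ref{lemm:gr_sbgp_disruption} each of them (possessing the valley-free, $m$-avoiding subpath of $P'$ to $d$) still has an available route to $d$, and thus $u$ really selects a route to $d$ that avoids $m$. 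Because every honest vertex obeys the NE policy it exports its selected route, so among honest vertices the data plane agrees with the control plane; forwarding from $u$ therefore follows $u$'s simple, $m$-avoiding selected route all the way to $d$. Consequently, once the hijack has steered the flow of $s$ into $m$, $m$ forwards it along $P$ (i.e.\ to $u$) and it reaches $d$: the hijack is an interception.

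For the negative part I would build an explicit \GRexpall instance and observe that the suffix-shielding is lost the moment $m$ uses two distinct paths $P_1=(m\ u_1)P_1'$ and $P_2=(m\ u_2)P_2'$: the vertices of $P_1'$ are protected by loop detection only against the propagation of $P_1$, not against $P_2$, and symmetrically. I would arrange $m$ to advertise $P_1$ and $P_2$ to different neighbors so that (i) their combined propagation attracts the flow of $s$ to $m$, while (ii) each advertisement drives a vertex lying on the tail of the \emph{other} path to adopt a route through $m$. In the resulting stable state every route $m$ might use to carry the intercepted flow onward has its tail poisoned — the next relevant vertex returns the traffic to $m$ — so the flow cycles and never reaches $d$. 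Hence the flow of $s$ is hijacked but not intercepted, something a single path could not achieve by the positive part.

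Convergence to a stable state (Lemma~\ref{lemma:grexpall_plain_stable_state}) and loop-freeness of honest forwarding are immediate, so the main obstacle is the counterexample. One must realise the two mutually-poisoning tails inside a \emph{valley-free}, \GRexpall-consistent topology in which \SBGP still lets $m$ legitimately replay $P_1'$ and $P_2'$ (constraining the orientations of the edges along which $m$ received them), while genuinely attracting the flow of $s$. The delicate point is to reconcile the class/length preferences that must pull each tail vertex through $m$ with the edge orientations forced by valley-freeness: it is exactly this tension that defeats the naive symmetric gadget (where $u_1$ and $u_2$ would each route through $m$ via the other), and producing a consistent asymmetric instance is where the real work lies.
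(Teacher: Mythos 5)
Your positive half is correct, and it takes a genuinely different and more elementary route than the paper. The paper proves a stronger invariant: the announced path $p$ itself remains available at $m$ in the new stable state, via a delicate argument that takes the first vertex $x$ of $p$ to deviate, splits on whether the disrupting path wins by better class or by same class, and runs an induction along the subpath between $x$ and the first deviating vertex $y$ of the disrupting path, juggling class and length inequalities. You prove only what interception actually needs: since $m$ announces the single path $(m\ u)P'$ and honest \SBGP vertices can only prepend, every route through $m$ carries $P'$ as a suffix, so loop detection permanently shields every vertex of $P'$ from selecting through $m$; Lemma~\ref{lemm:gr_sbgp_disruption} then gives $u$ an available (hence a selected) $m$-avoiding route to $d$, and stable-state consistency of honest forwarding (convergence being Lemma~\ref{lemma:grexpall_plain_stable_state}) carries the traffic that $m$ relays to $u$ on to $d$. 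This weaker conclusion --- $u$ keeps \emph{some} route, not necessarily the suffix of $p$ --- suffices for the first statement of Theorem~\ref{theo:gr_sbgp_disruption} and bypasses the paper's hardest case analysis entirely; what the paper's stronger statement buys in exchange is that $m$ retains the specific path $p$, i.e.\ it knows in advance the exact route its intercepted traffic will follow. (Your aside that honest vertices ``export by NE policy'' is slightly off, since GR2 can suppress exports, but your argument never uses universal export, only that a selected path's suffixes are the selected paths of its vertices in a stable state.)

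The negative half, however, has a genuine gap: a ``may not be'' claim requires an exhibited instance, and you explicitly leave the construction open --- worse, you assert that the symmetric mutual-poisoning gadget cannot be made \GRexpall-consistent and that an asymmetric instance is where ``the real work lies.'' The paper's counterexample (Fig.~\ref{fig:counterexample_gr_sbgp_disruption_yes}) is precisely the symmetric gadget you discard: $m$ has learned $(2\ 1\ d)$ and $(3\ 4\ d)$ from its two neighbors $2$ and $3$, and to hijack $s$ it must hijack both vertices $1$ and $4$, which forces the crosswise replay --- $(m\ 3\ 4\ d)$ to $2$ and $(m\ 2\ 1\ d)$ to $3$, the straight assignment being killed by loop detection since $2$ discards any path containing $2$. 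Each of $2$ and $3$ then selects its route through $m$, so both $(2\ 1\ d)$ and $(3\ 4\ d)$ cease to be available at $m$: every neighbor $m$ can forward to sends the traffic straight back to $m$, the hijack succeeds, and the interception fails. So the tension you anticipated between valley-freeness and the preferences pulling the tail vertices through $m$ does not materialize, and without an explicit instance the second statement of the theorem remains unproven in your write-up.
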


% \vspace{-0.5cm}
\begin{proof}
We prove the following more technical statement that implies the first part of the theorem. 
Let $G$ be a BGP instance, let $m$ be a vertex with \SBGP cheating capabilities. Let $p$ be a path available at $m$ in the stable state $S$ reached when $m$ behaves correctly. Suppose that $m$ starts announcing $p$ to any subset of its neighbors. Let $S'$ be the corresponding routing state. Path $p$ remains available at vertex $m$ in $S'$.
The truth of the statement implies that $m$ can forward the traffic to $d$ by exploiting $p$.

Suppose for a contradiction that path $p$ is disrupted in $S'$ when $m$ propagates it to a subset of its neighbors. Let $x$ be the first vertex of $p$ that prefers a different path $p_x$ ($p$ is disrupted by $p_x$) in $S'$ and let $p'$ be the subpath of $p$ from vertex $d$ to $x$ (see Fig.~\ref{fig:grexpall_sbgp_no_disruption}).
Observe that $p$ is not a subpath of $p_x$ as $x$ cannot select a path that passes through itself. 
Since $p_x$ is not available at $x$ in $S$, let $y$ be the vertex in $p_x$ closest to $d$ that selects a path $p_y$ that is preferred over $p_x'$ in $S$, where $p_x'$ is the subpath of $p_x$ from  $y$ to $d$.

\begin{figure}[t]
% \centering
\begin{minipage}[t]{0.59\textwidth}%
%\includegraphics[width=1\textwidth]{figures/empty}
% \includegraphics[width=1\textwidth]{figures/example_attacks_2}
% \caption{**** REMOVE?.}\label{fig:example_of_attack_2}
\vspace{-30mm} % adjust horizontal alignment between images
\includegraphics[width=1\textwidth]{figures/bgp_sbgp_plb_grexpall_new}
\caption{Reduction of a constrained \threesat problem to the \hijack problem when $m$ has \SBGP cheating capabilities.}\label{fig:bgp_sbgp_plb_grexpall}
\end{minipage}
\hfill
\begin{minipage}[t]{0.39\textwidth}%
% \vspace{0.3in}
\centering
\includegraphics[width=0.8\textwidth]{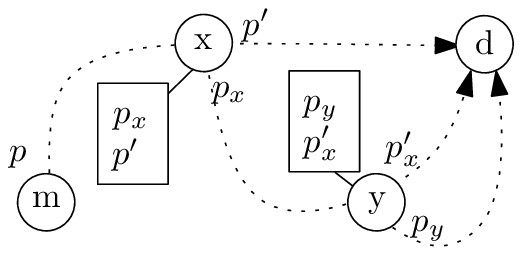}
\caption{Proof of Theorem~\ref{theo:gr_sbgp_disruption}. (a) The order of paths into the boxes represents the preference of the vertices.}\label{fig:grexpall_sbgp_no_disruption}
\includegraphics[width=0.5\textwidth]{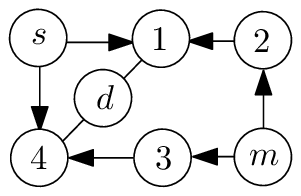}
\caption{An instance where $m$ cannot intercept traffic to $d$ but it can hijack it.}\label{fig:counterexample_gr_sbgp_disruption_yes}
\end{minipage}%

\end{figure}

We have two cases: either $f^x(p_x)>f^x(p')$ or $f^x(p_x)=f^x(p')$ (i.e., $p_x$ is preferred to $p'$ by better or by same class).

Suppose that $f^x(p_x)>f^x(p')$. By Lemma~\ref{lemm:gr_sbgp_disruption}, since there exists a valley-free path $p_x$ from $x$ to $d$ that does not traverse $m$, then $x$ has an available path of class at least $f^x(p_x)$. Hence, $x$ cannot select path $p'$ in $S$,  a contradiction. 

Suppose that $f^x(p_x)=f^x(p')$. Two cases are possible: either $p_y$ contains $x$ or not. In the first case either $f^y(p_y) > f^y(p_x')$ or $f^y(p_y) = f^y(p_x')$. If $f^y(p_y) > f^y(p_x')$, then we have that $f^{y}(p_{y}) \le f^{x}(p') = f^{x}(p_x) \le f^{y}(p_x')$, a contradiction. If $f^y(p_y) = f^y(p_x')$, we have that $|p_x'|<|p_x|\le|p'|<|p_y|$. A contradiction since a longer path is preferred.

The second case ($f^x(p_x)=f^x(p')$ and $p_y$ does not contain $x$) is more complex. We have that $|p'|\ge|p_x|$. Also, by Lemma~\ref{lemm:gr_sbgp_disruption}, since $p_y$ and $p_x'$ do not pass through $m$, then $y$ has an available path of class at least $\max\{f^y(p_y),f^y(p_x')\}$. As $y$ alternatively chooses $p_y$ and $p_x'$ we have that $f^y(p_y)=f^y(p_x')$, which implies that $|p_x'|\ge|p_y|$. Denote by $p_{xy}$ the subpath $(v_m\ \dots v_0)$ of $p_x$, where $v_0=y$ and $v_m=x$. Consider routing in state $S$. Two cases are possible: either $p_{xy}p_y$ is available at $x$ or not. In the first case, since $|p'|\ge|p_x|=|p_{xy}p_x'|\ge|p_{xy}p_y|$, we have a contradiction because $p'$ would not be selected in $S$. In the second case, we will prove that for each vertex $v_h \neq x$ in $p_{xy}$ we have that $|p_h|\le|(v_h\ \dots\ v_0)p_y|$, where $p_h$ is the path selected by $v_h$ in $S$. This implies that $|(v_m\ v_{m-1})p_{m-1}|\le|p_{xy}p_y|\le|p_x|\le|p'|$ and this leads to a contradiction. In fact, if $|(v_m\ v_{m-1})p_{m-1}|<|p'|$, then we have a contradiction because $p'$ would not be selected in $S$. Otherwise, if $|(v_m\ v_{m-1})p_{m-1}|=|p'|$, we have that $|p_x|=|p'|$. Then, $x$ prefers $p_x$ over $p'$ because of tie break. We have a contradiction since also $(v_m\ v_{m-1})p_{m-1}$ is preferred over $p'$ because of tie break in $S$. 

Finally, we prove that for each vertex $v_h \neq x$ in $p_{xy}$ we have that $|p_h|\le|(v_h\ \dots v_0)p_y|$. This trivially holds for $v_0=y$. We prove that if it holds for $v_i$ then it also holds for $v_{i+1}$. If $v_{i+1}$ selects $(v_{i+1}\ v_i)p_i$, then the property holds. Otherwise, $(v_{i+1}\ v_i)p_i$ is disrupted either by better class or by same class by a path $p_{i+1}$. In the first case, we have that either $p_{i+1}$ traverses $m$ or not. 
Suppose $p_{i+1}$ traverses $m$ and %let $p_{i+1}'$ be the subpath of $p_{i+1}$ from $v_{i+1}$ to $m$ and 
let $q'$ be the neighbor of $v_{i+1}$ on $p_{i+1}$.
Since $p_{i+1}$ disrupts $(v_{i+1}\ v_i)p_i$ by better class, then $p_{i+1}$ is composed by a directed path from $d$ to $q'$ and an edge $(q',v_{i+1})$ that can be either an oriented edge from $q'$ to $v_{i+1}$ or an unoriented edge. Let $n$ be the neighbor of $m$ on $p$ and $n'$ be the neighbor of $n$ on $p$ different from $m$. Consider the relationship between $n$ and $n'$. Suppose $n$ is a customer or a peer of $n'$. If $m$ is a provider or a peer of $n$, then $p$ is not valley-free and $p$ cannot be available at $m$ in $S$, which leads to a contradiction. Otherwise, if $m$ is a customer of $n$, then $n$ would have preferred the best path from its customer $m$ rather than the path learnt from its provider $n'$. It implies that $p$ would not be available at $m$ in $S$, that is a contradiction. Hence, $n$ is a provider of $n'$ and the subpath of $p$ from $d$ to $n$ is a directed path. Since  $f^x(p_x)=f^x(p')$, we have that also $p_x$ is a directed path from $d$ to $x$. Therefore, $v_{i+1}$ is a provider of $v_i$ and so $(v_{i+1}\ v_i)p_i$ would not be disrupted by better class in $S$, which is a contradiction.
Hence, $p_{i+1}$ does not  traverse $m$. By Lemma~\ref{lemm:gr_sbgp_disruption}, a path of a class better than $(v_{i+1}\ \dots\ v_0)p_x'$ is available at $v_{i+1}$ and so $v_{i+1}$ cannot select  $(v_{i+1}\ \dots\ v_0)p_x'$ in $S'$, a contradiction. In the second case ($(v_{i+1}\ v_i)p_i$ is disrupted  by same class by a path $p_{i+1}$) we have that $|p_{i+1}|\le|(v_{i+1}\ v_i)p_{i}|\le|(v_{i+1}\ \dots\ v_0)p_y|$. The second inequality comes from the induction hypothesis.

This concludes the first part of the proof. For proving the second part we show an example where $m$ announces different paths to different neighbors and the resulting hijacking is not an interception. Consider the BGP instance in Fig.~\ref{fig:counterexample_gr_sbgp_disruption_yes}. 
In order to hijack traffic from $s$, vertices $1$ and $4$ must be hijacked. Hence, $m$ must announce $(m\ 3\ 4\ d)$ to $2$ and $(m\ 2\ 1\ d)$ to $3$. However, since $(3\ 4\ d)$ and $(2\ 1\ d)$ are no longer available at $m$ the interception fails.
\qed
\end{proof}
%%%%%%%%%%%%%%%%%%%%%%%%%%%%%%%%%%%%%%%%%%%%%%%%%%%%%%%%%%%%%%%%%%%%%%%%%%%%%%%%

%========================================= old stuff

%% examples where announcing one path in GR-NoExpAll  may cause the impossibility to perform an interception attack.

\remove{

\newcommand{\grSbgpTwoPaths}{
Suppose that in the stable state of a \GRexpall compliant BGP instance $G$, a manipulator $m$ with \SBGP cheating capabilities announces different paths to different neighbors. Then, one of its announced path may be no more available at vertex $m$.
}

\begin{theorem}\label{theo:gr_sbgp_two_paths}
\grSbgpTwoPaths
\end{theorem}

\begin{proof}
Consider the BGP instance depicted in in Fig.~\ref{fig:counterexample_gr_sbgp_disruption_yes}. If manipulator $m$ announces to its neighbors $2$ and $3$ its available paths $(m\ 3\ 1\
0)$ and $(m\ 2\ 1\ 0)$ to vertices $2$ and $3$, respectively, then both $2$ and $3$ select these path as their best path because they are received from customers. % \qed.

\begin{figure}
  \centering
  \subfloat[The BGP instance used in the proof of Theorem~\ref{theo:gr_sbgp_two_paths}.]{\label{fig:counterexample_gr_sbgp_disruption_yes}\includegraphics[width=0.27\columnwidth]{figures/counterexample_gr_sbgp_disruption_yes2}}\ \ \ \ \ \ \ \ \ \                
  \subfloat[The BGP instance used in the proof of Theorem~\ref{theo:grs-sbgp-disruption}.]{\label{fig:counterexample_gr_sbgp_disruption}\includegraphics[width=0.60\columnwidth]{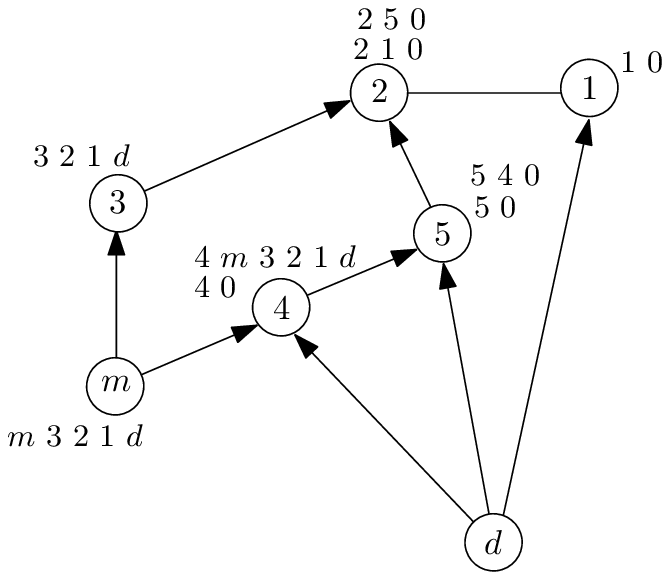}}
  \caption{Examples of attacks.}
  \label{fig:examples_of_attacks}
\end{figure}

\end{proof}

}

% \input{asset}

%\input{attacks}

%\input{more_stuff}

% \vspace{-0.2cm}
%%%%%%%%%%%%%%%%%%%%%%%%%%%%%%%%%%%%%%%%%%%%%%%%%%%%%%%%%%%%%%%%%%%%%%%%%%%%%%%%
\section{Conclusions and Open Problems}\label{sect:conclusions}
%%%%%%%%%%%%%%%%%%%%%%%%%%%%%%%%%%%%%%%%%%%%%%%%%%%%%%%%%%%%%%%%%%%%%%%%%%%%%%%%

Given a communication flow between two ASes we studied how difficult it is for a malicious AS $m$ to devise a strategy for hijacking or intercepting that flow. This problem marks a sharp difference between BGP and \SBGP. Namely, while in a realistic scenario the problem is computationally tractable for typical BGP attacks it is NP-hard for \SBGP. This gives new evidence of the effectiveness of the adoption of \SBGP. It is easy to see that all the NP-hardness results that we obtained for the hijacking problem easily extend to the interception problem. Further, we solved a problem left open in~\cite{gshr-hsasirp-10}, showing when performing a hijacking in \SBGP is equivalent to performing an interception.

% Our paper stimulates several discussions. As an example, one may wonder if using the BGP features only partially makes computationally easier the problem of finding attack strategies. For instance, with the \emph{\simpleHijack} cheating capabilities $m$ can do the simplest possible announcement manipulation: $m$ can only pretend to be the origin of the prefix owned by $d$, announcing this to a subset of its neighbors, without manipulating the announced path (this is indeed what happened on 24th February 2008). Our insights on the problem make easy to prove the following:

% \newcommand{\greBgpSimpleHijack}{
% If the manipulator has \simpleHijack cheating capabilities, then problem \hijack is \NPhard while if the length of
% the longest valley-free path is bounded by a constant, problem \hijack is in P.}

% \begin{theorem}\label{theo:grexpall_simple-hijack_bgp}
% \greBgpSimpleHijack
% \end{theorem}

Several problems remain open:
\begin{inparaenum}
\item We focused on a unique $m$. How difficult is it to find a strategy involving several malicious ASes~\cite{gshr-hsasirp-10}?
% \item We assumed that, once forged a bogus announcement, $m$ attracts the traffic steadily announcing it to its neighbors. However, it is not difficult to construct examples where $m$ fails in intercepting the flow with this technique but succeeds if it attacks in $2$ rounds. In the first round it does a first bogus announcement until a stable routing is reached. In the second round it does the same with a different bogus announcement. More generally, it is not difficult to construct examples where $m$ fails in intercepting the flow in a $k$-rounds strategy but succeed with $k+1$ rounds. How difficult is it for $m$ to find $k$-rounds strategies?
\item  In~\cite{szr-pbrpcnhr-10} it has been proposed to disregard the AS-paths length in the BGP decision process. How difficult is it to find an attack strategy in this different model?
% \item As for other results on the security of BGP, the length of AS-paths plays an important role. How difficult is it to find an attack strategy if in BGP the AS-path is used only for loop avoidance~\cite{szr-pbrpcnhr-10}?
%\item What is the impact of the \ASSET  for the problem ``How much traffic can an AS attract from the Internet?'' already studied in ~\cite{bfz-asophaiiti-07,gshr-hsasirp-10}?
% \item Is the \ASSET hijacking strategy effective compared with the techniques in~\cite{bfz-asophaiiti-07,gshr-hsasirp-10}?
\end{inparaenum}

% show why the NP-hardness result easily extend from the hijacking problem to the interception problem.

% Mention here the omitted results?

% Relationships with I-BGP? 

% What about so-BGP and other security variations of BGP?

% Other citations? E.g. game theory

% example of attack with AS-SET?

%\bibliographystyle{IEEEtran} \scriptsize
\bibliographystyle{splncs03}
% \vspace{-0.3cm}
\bibliography{bibliography}

% \clearpage
% \input{appendices}

\end{document}